\newcommand{\set}[1]{\left\{#1\right\}}
\newcommand{\pr}[1]{\left(#1\right)}
\newcommand{\fpr}[1]{\mathopen{}\left(#1\right)}
\newcommand{\spr}[1]{\left[#1\right]}
\newcommand{\abs}[1]{{\left|#1\right|}}
\newcommand{\enpr}[2]{\pr{#1 ,\ldots , #2}}
\newcommand{\define}{\leftarrow}
\newcommand{\bigO}[1]{\dispfunc{\mathcal{O}}{#1}}
\DeclareRobustCommand{\dispfunc}[2]{%
  \ensuremath{%
  \ifthenelse{\equal{#2}{}}%
    {\mathit{#1}}%
    {\mathit{#1}\fpr{#2}}}}
\newcommand{\pen}[1]{\dispfunc{\tau}{#1}}
\newcommand{\scoreexp}[1]{\dispfunc{q_{exp}}{#1}}
\newcommand{\scoregeo}[1]{\dispfunc{q_{geo}}{#1}}
\newcommand{\pexp}[1]{\dispfunc{p_{exp}}{#1}}
\newcommand{\pgeo}[1]{\dispfunc{p_{geo}}{#1}}
\newcommand{\boundprb}{\textsc{BndBurst}\xspace}
\newcommand{\brstexpprb}{\textsc{Exp}\xspace}
\newcommand{\brstgeoprb}{\textsc{Geo}\xspace}
\newcommand{\algdp}{\textit{Viterbi}\xspace}
\newcommand{\alggeo}{\textit{ApproxGeo}\xspace}
\newcommand{\alggeoalpha}{\textit{GeoAlpha}\xspace}
\newcommand{\algexp}{\textit{ApproxExp}\xspace}
\newcommand{\algexpalpha}{\textit{ExpAlpha}\xspace}
\newcommand{\algmeanexp}{\textit{ExpMean}\xspace}
\newcommand{\algmeangeo}{\textit{GeoMean}\xspace}
\newcommand{\dtname}[1]{\textsl{#1}}
\newtheorem{lemma}{Lemma}[section]
\newtheorem{proposition}{Proposition}[section]
\newtheorem{problem}{Problem}[section]
\newtheorem{example}{Example}[section]
\definecolor{yafaxiscolor}{rgb}{0.3, 0.3, 0.3}
\definecolor{yafcolor1}{rgb}{0.4, 0.165, 0.553}
\definecolor{yafcolor2}{rgb}{0.949, 0.482, 0.216}
\definecolor{yafcolor3}{rgb}{0.47, 0.549, 0.306}
\definecolor{yafcolor4}{rgb}{0.925, 0.165, 0.224}
\definecolor{yafcolor5}{rgb}{0.141, 0.345, 0.643}
\definecolor{yafcolor6}{rgb}{0.965, 0.933, 0.267}
\definecolor{yafcolor7}{rgb}{0.627, 0.118, 0.165}
\definecolor{yafcolor8}{rgb}{0.878, 0.475, 0.686}
\tikzstyle{exnode} = [inner sep = 1pt]
\tikzstyle{labnode} = [sloped, text = black, font = \scriptsize, inner sep = 1pt]
\tikzstyle{exedge} = [yafcolor5, draw, thick, >=latex, ->]
\tikzstyle{exedge2} = [yafcolor2, draw, thick, >=latex, ->]
\tikzstyle{interval} = [yafcolor1, very thick]
\newlength{\intervalwidth}
\newcommand{\drawinterval}[3]{
\draw[#3] (#1) ++(0, \intervalwidth) -- (#1) -- (#2) -- ++(0, \intervalwidth);
}
\newlength{\yafaxispad}
\newlength{\yaftlpad}
\newlength{\yaflabelpad}
\newlength{\yafaxiswidth}
\newlength{\yafticklen}
\def\pgfplots@drawtickgridlines@INSTALLCLIP@onorientedsurf#1{}
\newcommand{\yafdrawxaxis}[2]{
	\pgfplotstransformcoordinatex{#1}\let\xmincoord=\pgfmathresult 
	\pgfplotstransformcoordinatex{#2}\let\xmaxcoord=\pgfmathresult 
	\pgfsetlinewidth{\yafaxiswidth} 
	\pgfsetcolor{yafaxiscolor}
	\pgfpathmoveto{\pgfpointadd{\pgfpointadd{\pgfplotspointrelaxisxy{0}{0}}{\pgfqpointxy{\xmincoord}{0}}}{\pgfqpoint{-0.5\yafaxiswidth}{\yafaxispad}}}
	\pgfpathlineto{\pgfpointadd{\pgfpointadd{\pgfplotspointrelaxisxy{0}{0}}{\pgfqpointxy{\xmaxcoord}{0}}}{\pgfqpoint{0.5\yafaxiswidth}{\yafaxispad}}}
	\pgfusepath{stroke}

}
\newcommand{\yafdrawyaxis}[2]{
	\pgfplotstransformcoordinatey{#1}\let\ymincoord=\pgfmathresult 
	\pgfplotstransformcoordinatey{#2}\let\ymaxcoord=\pgfmathresult 
	\pgfsetlinewidth{\yafaxiswidth} 
	\pgfsetcolor{yafaxiscolor}
	\pgfpathmoveto{\pgfpointadd{\pgfpointadd{\pgfplotspointrelaxisxy{0}{0}}{\pgfqpointxy{0}{\ymincoord}}}{\pgfqpoint{\yafaxispad}{-0.5\yafaxiswidth}}}
	\pgfpathlineto{\pgfpointadd{\pgfpointadd{\pgfplotspointrelaxisxy{0}{0}}{\pgfqpointxy{0}{\ymaxcoord}}}{\pgfqpoint{\yafaxispad}{0.5\yafaxiswidth}}}
	\pgfusepath{stroke}
}
\newcommand{\yafdrawaxis}[4]{\yafdrawxaxis{#1}{#2}\yafdrawyaxis{#3}{#4}}
\pgfplotsset{axis y line=left, axis x line=bottom,
	tick align=outside,
	compat = 1.3,
	tickwidth=\yafticklen,
	clip = false,
	every axis title shift = 0pt,
    x axis line style= {-, line width = 0pt, opacity = 0},
    y axis line style= {-, line width = 0pt, opacity = 0},
    x tick style= {line width = \yafaxiswidth, color=yafaxiscolor, yshift = \yafaxispad},
    y tick style= {line width = \yafaxiswidth, color=yafaxiscolor, xshift = \yafaxispad},
    x tick label style = {font=\scriptsize, yshift = \yaftlpad},
    y tick label style = {font=\scriptsize, xshift = \yaftlpad},
    every axis y label/.style = {at = {(ticklabel cs:0.5)}, rotate=90, anchor=center, font=\scriptsize, yshift = -\yaflabelpad},
    every axis x label/.style = {at = {(ticklabel cs:0.5)}, anchor=center, font=\scriptsize, yshift = \yaflabelpad},
    x tick label style = {font=\scriptsize, yshift = 1pt},
    grid = major,
    major grid style  = {dash pattern = on 1pt off 3 pt},
	every axis plot post/.append style= {line width=\yafaxiswidth} ,
	legend cell align = left,
	legend style = {inner sep = 1pt, cells = {font=\scriptsize}},
	legend image code/.code={%
		\draw[mark repeat=2,mark phase=2,#1] 
		plot coordinates { (0cm,0cm) (0.15cm,0cm) (0.3cm,0cm) };% 
	} 
}
\begin{document}

\title{Discovering bursts revisited:\\guaranteed optimization of the model parameters}

\author{Nikolaj Tatti \\ HIIT, Aalto University, Finland, \texttt{nikolaj.tatti@aalto.fi}}
\date{}

\maketitle

\begin{abstract} \small\baselineskip=9pt
One of the classic data mining
tasks is to discover bursts, time intervals, where events occur at
abnormally high rate.
In this paper we revisit Kleinberg's seminal work, where bursts are discovered
by using exponential distribution with
a varying rate parameter: the regions where it is more advantageous to set the
rate higher are deemed bursty.
The model depends on two parameters, the initial rate and the change rate.
The initial rate, that is, the rate that is used
when there are no burstiness was set to the average rate over the whole
sequence. The change rate is provided by the user.

We argue that these choices are suboptimal: it leads to worse likelihood,
and may lead to missing some existing bursts. We propose an alternative problem
setting, where the model parameters are selected by optimizing the likelihood of the
model. While this tweak is trivial from the problem definition point of view,
this changes the optimization problem greatly.
To solve the problem in practice,
we propose efficient ($1 + \epsilon$) approximation schemes. 
Finally, we demonstrate empirically that
with this setting we are able to discover bursts that would have otherwise be undetected.

\end{abstract}

\section{Introduction}\label{sec:intro}

Many natural phenomena occur unevenly over time, and one of the classic data mining
tasks is to discover bursts, time intervals, where events occur at
abnormally high rate. In this paper we revisit a seminal work by
\citet{kleinberg:03:burst} that has been used, for example, in discovering trends in citation
literature~\citep{chen:06:citespace}, analyzing topics~\citep{mane:04:topic},
recommending citations~\citep{he:11:citation}, analyzing
disasters~\citep{fontugne:11:disaster}, and analyzing social
networks~\citep{backstrom:06:group} and blogs~\citep{kumar:03:bursty}.

\citet{kleinberg:03:burst} discovers bursts by modelling the time between
events with an exponential model with varying rate parameter. The rate
starts at the base level $\beta$ and can be
raised (multiple times) by a change parameter $\alpha$, but it cannot descend $\beta$. Every time
we raise the parameter, we need to pay a penalty. In the original approach,
the change rate $\alpha$ is given as a parameter and the
base rate is selected to be $\beta = 1/\mu$, where $\mu$ is the average of the
sequence.

We argue that this choice of $\beta$ is suboptimal: (\emph{i}) it does not
maximize the likelihood of the model, and, more importantly, (\emph{ii}) a more
optimized $\beta$ may reveal bursts that would have gone undetected.

We propose a variant of the original burstiness problem, where we are no longer
given the base parameter $\beta$ but instead we are asked to optimize it along
with discovering bursts. We also consider variants where we optimize $\alpha$ as well.
These tweaks are rather mundane from the problem definition
point of view but it leads to a surprisingly difficult optimization problem.

We consider two different models for the delays: exponential and geometric.
First, we will show that we can solve our problem for exponential model in
polynomial time, when $\alpha$ is given. Unfortunately, this algorithm requires $\bigO{n^3k^4}$ time,\!\footnote{
Here, $n$ is the sequence length and $k$ is the maximum number of times the rate can be increased.}
thus being impractical. Even worse, we cannot apply the same approach for
geometric model.  This is a stark contrast to the original approach, where the
computational complexity is $\bigO{nk}$. 

Fortunately, we can estimate burst discovery in quasi-linear
time w.r.t.\ the sequence length; see Table~\ref{tab:algos} for a summary of
the algorithms. We obtain $(1 + \epsilon)$ approximation guarantee for the geometric
model. We also obtain, under some mild conditions, $(1 + \epsilon)$ approximation guarantee
for the exponential model.

In all four cases, the algorithm is simple: we test
multiple values of $\beta$ (and $\alpha$), and use the same efficient dynamic program that is
used to solve the original problem. Among the tested sequences we select the best one.
The main technical challenge is to test the multiple values of $\alpha$ and $\beta$
such that we obtain the needed guarantee while still maintaining a quasi-linear running time with respect to sequence length.

The remainder of the paper is as follows. We review the original burstiness
problem in Section~\ref{sec:prel}, and define our variant in
Section~\ref{sec:problem}. We introduce the exact algorithm in
Section~\ref{sec:exact}, and present the approximation algorithms in
Section~\ref{sec:geoburst}--\ref{sec:expburst}. In Section~\ref{sec:related}, we
present the related work.  In Section~\ref{sec:exp}, we compare demonstrate
empirically that our approach discovers busts that may go unnoticed.  We
conclude with discussion in Section~\ref{sec:conclusions}.
The proofs are given in Appendix, available in the full version of this paper.

\begin{table*}
\caption{Summary of algorithms discussed in this paper.
Here $k$ is the number of allowed levels,
$n$ is the length of the sequence, $\mu$ is the arithmetic mean, and $g$ is the geometric mean,
$\Omega$ is the maximum of the sequence, and
$\omega$ is the minimum of the sequence.  We assume that $\omega > 0$.
$\brstexpprb(\alpha, \beta)$ is the original problem considered by~\citet{kleinberg:03:burst},
and $\brstgeoprb(\alpha, \beta)$ is a minor variation of the problem.
The remaining results are the main contribution of this paper.
}
\begin{tabular*}{\textwidth}{@{\extracolsep{\fill}}lrr}
\toprule
Problem & guarantee & running time \\
\midrule
$\brstexpprb(\alpha, \beta)$ & exact & $\bigO{nk}$ \\
$\brstexpprb(\alpha)$ & exact & $\bigO{n^3k^4}$ \\
$\brstexpprb(\alpha)$ & $\mathit{SOL} - n \log g \leq (1 + \epsilon)(\mathit{OPT} - n\log g)$ & $\bigO{\epsilon^{-1}nk^2 \log \alpha}$ \\
$\brstexpprb$ & $\mathit{SOL} - n \log g \leq (1 + \epsilon)(\mathit{OPT} - n\log g)$ & $\bigO{\epsilon^{-2}nk^3 \log^2 (\Omega / \omega)}$ \\[2mm]
$\brstgeoprb(\alpha, \beta)$ & exact & $\bigO{nk}$ \\
$\brstgeoprb(\alpha)$ & $\mathit{SOL} \leq (1 + \epsilon)\mathit{OPT}$ & $\bigO{\epsilon^{-1} nk \log \log n}$ \\
$\brstgeoprb$ & $\mathit{SOL} \leq (1 + \epsilon)\mathit{OPT}$ & $\bigO{\epsilon^{-2} nk \log (n  \mu  k / \epsilon) \log \log n}$ \\
\bottomrule
\end{tabular*}
\label{tab:algos}
\end{table*} 

\section{Preliminaries}\label{sec:prel}
In this section, we review the setting proposed by~\citet{kleinberg:03:burst},
as well as the dynamic program used to solve this setting.

% Sequence
Assume that we observe an event at different time points, say $t_0, \ldots,
t_n$. The main idea behind discovering bursts is to model the delays between
the events, $s_i = t_i - t_{i - 1}$: if the events occur at higher pace, then
we expect $s_i$ to be relatively small.

% burstiness score
Assume that we are given a sequence of delays $S = s_1, \ldots, s_n$.
In order to measure the burstiness of the sequence,
we will model it with an exponential distribution,
	$\pexp{s ; \lambda} = \lambda \exp\fpr{-\lambda s}$.
Larger $\lambda$ dictates that the delays should be shorter, that is, the
events should occur at faster pace.

The idea behind modelling burstiness is to allow the parameter $\lambda$
fluctuate to a certain degree: We start with $\lambda = \beta$, where $\beta$
is a parameter. At any point we can increase the parameter by multiplying with
another parameter $\alpha$. We can also decrease the parameter by dividing
by $\alpha$. We can have multiple increases and decreases, however, we cannot
decrease the parameter below $\beta$. Every time we change the rate from $x$ to $y$, we have
to pay a penalty, $\pen{x, y; \gamma}$, controlled by a parameter $\gamma$.

More formally,
assume that we have assigned the burstiness levels for each
delays $L = \ell_1, \ldots, \ell_n$, where each $\ell_i$ is a non-negative integer.
We will refer to this sequence as the \emph{level sequence}.
For convenience, let us write $\ell_0 = 0$. 
Then the score of burstiness $\scoreexp{L, S; \alpha, \beta, \gamma}$ is equal to
\[
	\sum_{i = 1}^n -\log \pexp{s_i ; \beta\alpha^{\ell_i}} + \pen{\ell_{i - 1}, \ell_i; \gamma}\quad.
\]
The first term---negative log-likelihood of the data---measures how well the
burstiness model fits the sequence, while the second term penalizes the erratic
behavior in $L$. Ideally, we wish to have both terms as small as possible.
To reduce clutter we will often ignore $\gamma$ in notation, as
this parameter is given, and is kept constant.

We will use the penalty function given in~\citep{kleinberg:03:burst},
\[
	\pen{x, y} = \max(y - x, 0) \gamma \log n,
\]
where $n$ is the length of the input sequence. Note that $\pen{}$ depends on $\gamma$ and $n$
but we have suppressed this from the notation to avoid clutter.

We can now state the burstiness problem. 
\begin{problem}[$\brstexpprb(\alpha, \beta)$]
\label{prb:burstexporig}
Given a delay sequence $S$, parameters $\alpha$, $\beta$, $\gamma$, and a
maximum number of levels $k$, find a level sequence $L = \ell_1, \ldots,
\ell_n$, where $\ell_i$ is an integer $0 \leq \ell_i \leq k$,
minimizing $\scoreexp{L, S; \alpha, \beta, \gamma}$.
\end{problem}

Two remarks are in order: First of all, the original problem definition given
by~\citet{kleinberg:03:burst} does not directly use $k$, instead the levels are only limited
implicitly due to $\pen{}$. However, in practice, $k$ is needed by the dynamic
program, but it is possible to select a large enough $k$ such that enforcing
$k$ does not change the optimal sequence~\citep{kleinberg:03:burst}. Since our complexity
analysis will use $k$, we made this constraint explicit. Secondly, the parameter $\beta$
is typically set to $1 / \mu$, where $\mu = \frac{1}{n} \sum s_i$ is the average delay.

We also study an altenative objective.
Exponential distribution is meant primarily for real-valued delays. If the delays
are integers, then the natural counterpart of the distribution is the geometric distribution 
	$\pgeo{s ; \lambda} = (1 - \lambda)\lambda^s$.
Here, \emph{low} values of $\lambda$ dictate that the delays should occur faster.
We can now define 
\[
	\scoregeo{L, S; \alpha, \beta, \gamma} = \sum_{i = 1}^n -\log p_{\mathit{geo}}(s_i ; \beta\alpha^{\ell_i}) + \pen{\ell_{i - 1}, \ell_i}\,.
\]
Note that in $\scoreexp{}$ we use $\alpha > 1$ while here we use $\alpha < 1$.
We can now define a similar optimization problem.
\begin{problem}[$\brstgeoprb(\alpha, \beta)$]
\label{prb:burstgeoorig}
Given an integer delay sequence $S$, parameters $\alpha$, $\beta$, $\gamma$, and a
maximum number of levels $k$, find a level sequence $L = \ell_1, \ldots,
\ell_n$, where $\ell_i$ is an integer $0 \leq \ell_i \leq k$,
minimizing $\scoregeo{L, S; \alpha, \beta, \gamma}$.
\end{problem}

We can solve Problem~\ref{prb:burstexporig} or Problem~\ref{prb:burstgeoorig}
using the standard dynamic programming algorithm by~\citet{viterbi:67:dp}.
Off-the-shelf version of this algorithm requires $\bigO{nk^2}$ time. However, we
can easily speed-up the algorithm to $\bigO{nk}$; for completeness we present this speed-up in Appendix~\ref{sec:app_viterbi}.

\section{Problem definition}\label{sec:problem}

We are now ready to state our problem. 
The difference between our setting and Problem~\ref{prb:burstexporig} is that
here we are asked to optimize $\beta$, and possibly $\alpha$, along with the levels, while in the
original setting $\beta$ was given as a parameter.

We consider two problem variants. In the first variant, we optimize $\beta$
while we are given $\alpha$.

\begin{problem}[$\brstexpprb(\alpha)$]
Given a delay sequence $S$, parameters $\alpha$, $\gamma$, and a
maximum number of levels $k$, find a level sequence $L = \ell_1, \ldots,
\ell_n$, where $\ell_i$ is an integer $0 \leq \ell_i \leq k$,
\emph{and a parameter $\beta$},
minimizing $\scoreexp{L, S; \alpha, \beta, \gamma}$.
\end{problem}

In the second variant, we optimize both $\alpha$ and $\beta$. 

\begin{problem}[$\brstexpprb$]
Given a delay sequence $S$, a parameter $\gamma$, and a
maximum number of levels $k$, find a level sequence $L = \ell_1, \ldots,
\ell_n$, where $\ell_i$ is an integer $0 \leq \ell_i \leq k$,
\emph{and parameters $\alpha$ and $\beta$},
minimizing $\scoreexp{L, S; \alpha, \beta, \gamma}$.
\end{problem}

While this modification is trivial and mundane from the problem definition
point of view, it carries several crucial consequences. First of all,
optimizing $\beta$ may discover bursts that would otherwise be undetected.
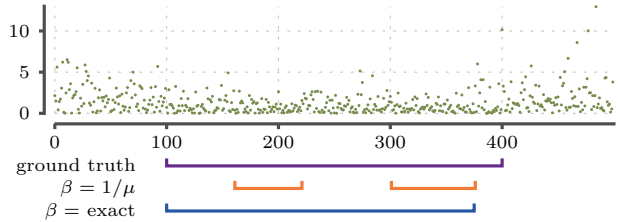
\begin{figure}[ht!]
\begin{center}
\begin{tikzpicture}[baseline]
\begin{axis}[
    height = 3cm,
    width = 9cm,
    cycle list name=yaf,
	clip mode = individual,
    ]

\addplot[only marks, mark size = 0.3, mark = *, yafcolor3, mark options = {line width = 0.5pt}]
	table[x expr = \coordindex, y index = 0, header = false]  {meanvsexact.dat};

\coordinate (b1) at (0, -0.7cm);
\coordinate (b2) at (0, -1cm);
\coordinate (b3) at (0, -1.3cm);

\node[anchor = east, font = \scriptsize] at (1.2cm, -0.7cm) {ground truth};
\node[anchor = east, font = \scriptsize] at (1.2cm, -1cm) {$\beta = 1 / \mu$};
\node[anchor = east, font = \scriptsize] at (1.2cm, -1.3cm) {$\beta =  $ exact};

\coordinate (s) at (axis cs:100, 0);
\coordinate (e) at (axis cs:400, 0);
\drawinterval{b1-|s}{b1-|e}{interval};

\coordinate (s) at (axis cs:161, 0);
\coordinate (e) at (axis cs:221, 0);
\drawinterval{b2-|s}{b2-|e}{interval, yafcolor2};
\coordinate (s) at (axis cs:301, 0);
\coordinate (e) at (axis cs:376, 0);
\drawinterval{b2-|s}{b2-|e}{interval, yafcolor2};

\coordinate (s) at (axis cs:100, 0);
\coordinate (e) at (axis cs:375, 0);
\drawinterval{b3-|s}{b3-|e}{interval, yafcolor5};

\pgfplotsextra{\yafdrawaxis{1}{500}{0}{13}}
\end{axis}
\end{tikzpicture}
\end{center}
\caption{A toy data set $S$ with a burst between 100 and 400. Low values indicate
short delays, bursts. The indicated regions are
(\emph{i}) the ground truth,
(\emph{ii}) bursts discovered with $\beta = 1 / \mu$, where $\mu$ is the average delay, and
(\emph{iii}) bursts discovered with $\beta$ set to the exact value of the generative model.
}
\label{fig:meanvsexact}
\end{figure}

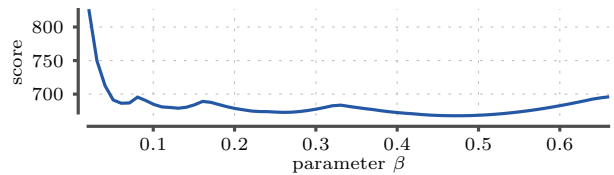
\begin{figure}[ht!]
\begin{center}
\begin{tikzpicture}[baseline]
\begin{axis}[xlabel={parameter $\beta$}, ylabel= {score},
    height = 3cm,
    width = 8.5cm,
    cycle list name=yaf,
	clip mode = individual,
	xtick = {0.02, 0.1, 0.2, 0.3, 0.4, 0.5, 0.6}
    ]

\addplot[yafcolor5] table[x index = 0, y index = 1, header = false]  {nonconvex.dat};

\pgfplotsextra{\yafdrawaxis{0.02}{0.66}{672}{826}}
\end{axis}
\end{tikzpicture}
\end{center}
\caption{Score $\pexp{S, L^*; \alpha, \beta, k}$ as a function of $\beta$,
where $\alpha = 2$, $k = 4$, and $L^*$ is the optimal level sequence for the given parameters.
Low
values are better. 
}
\label{fig:nonconvex}
\end{figure}

\begin{example}
Consider a sequence given in Figure~\ref{fig:meanvsexact}, which shows a
sequence of $500$ delays. The burst between 100 and 400 is generated using
exponential model with $\lambda = 1/2$, the remaining delays are generated
using $\lambda = 1$. We applied \algdp with $\beta^{-1}$ equal to the average
of the sequence, the value used by~\citet{kleinberg:03:burst}, and compare it to $\beta = 1/2$,
which is the correct ground level of the generative model.
The remaining parameters were set to $\alpha = 2$, $\gamma = 1$, and $k = 1$.  We see that in the
latter case we discover a burst that is much closer to the ground truth.\qed

\end{example}

Our second remark is that if $\alpha$ and $\beta$ are given, one can easily discover the
optimal bursts using \algdp. The optimization becomes non-trivial when we need to optimize $\alpha$ and $\beta$
as well. To make matters worse, the score as a function of $\beta$ is non-convex, as
demonstrated in Figure~\ref{fig:nonconvex}. Hence, we can easily get stuck
in local minima.

Next,
we introduce discrete variants of $\brstexpprb(\alpha)$ and \brstexpprb.

\begin{problem}[$\brstgeoprb(\alpha)$]
Given an integer delay sequence $S$, parameters $\alpha$, $\gamma$, and a
maximum number of levels $k$, find a level sequence $L = \ell_1, \ldots,
\ell_n$, where $\ell_i$ is an integer $0 \leq \ell_i \leq k$,
\emph{and a parameter $\beta$},
minimizing $\scoregeo{L, S; \alpha, \beta, \gamma}$.
\end{problem}

\begin{problem}[$\brstgeoprb$]
Given an integer delay sequence $S$, a parameter $\alpha$, and a
maximum number of levels $k$, find a level sequence $L = \ell_1, \ldots,
\ell_n$, where $\ell_i$ is an integer $0 \leq \ell_i \leq k$,
\emph{and parameters $\alpha$ and $\beta$},
minimizing $\scoregeo{L, S; \alpha, \beta, \gamma}$.
\end{problem}

Despite being very similar problems, we need to analyze these problems
individually.  We will show that $\brstexpprb(\alpha)$ can be solved exactly in
polynomial time, although, the algorithm is too slow for practice.
This approach does not work for other problems but we will show
that all four problems can be $(1 + \epsilon)$-approximated efficiently.

Before we continue, we need to address a pathological case when solving
\brstexpprb: the problem of \brstexpprb is illdefined if the delay sequence
$S$ contains a zero. To see this, assume that $s_i = 0$. Then a level sequence
$\ell_i = 1$, and $\ell_j = 0$, for $j \neq i$, with $\alpha = \infty$ and
$\beta = 1$ leads to a score of $-\infty$. This is because $\pexp{s_i; \beta\alpha} = \infty$
and the remaining terms are finite. This is why we assume that whenever we deal with
\brstexpprb, we have $s_i > 0$.
If we have $s_i = 0$, then we can either set $\alpha$ manually by using $\brstexpprb(\alpha)$
or shift the delays by a small amount.

\section{Exact algorithm for $\brstexpprb(\alpha)$}\label{sec:exact}
In this section we present an exact polynomial algorithm for solving $\brstexpprb(\alpha)$.
Unfortunately, this algorithm is impractically slow for large sequences: the time complexity is $\bigO{n^3k^4}$ and
the space complexity is $\bigO{n^3k^3}$. Thus, it only serves as a theoretical result.
More practical algorithms are given in the next sections.

In order to solve \brstexpprb we introduce a more complicated optimization problem.

\begin{problem}[\boundprb]
Given a delay sequence $S = s_1, \ldots, s_n$, a parameter $\alpha$, budget parameters $d$ and $m$,
and a maximum number of levels $k$, find a level sequence $L = \ell_1, \ldots, \ell_n$,
with $0 \leq l_i \leq k$, minimizing 
\[
\begin{split}
	&\sum_{i = 1}^n \alpha^{l_i} s_i\quad \text{such that} \\
	&\sum_i \max\pr{\ell_i - \ell_{i - 1}, 0} = d \quad\text{and}\quad
	\sum_i \ell_i = m\quad. 
\end{split}
\]
\end{problem}

We will show that this problem can be solved in polynomial time. But before, let
us first show that \brstexpprb and \boundprb are intimately connected. See Appendix~\ref{sec:app_boundprb} for the proof.

\begin{proposition}
\label{prop:boundprb}
Assume a delay sequence $S$, and parameters $\alpha$ and $\gamma$, and an upper
bound for levels $k$. 
There are budget parameters $d \leq k(n + 1)/2$ and $m \leq kn$ for which
the level sequence $L$ solving \boundprb also solves
$\brstexpprb(\alpha)$ along with
\[
	\beta = \frac{n}{\sum_i s_i \alpha^{\ell_i}}\quad.
\]
\end{proposition}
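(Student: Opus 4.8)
The plan is to \emph{profile out} the free parameter $\beta$ and then recognize the residual problem as \boundprb with a suitable budget. First I would write the negative log-likelihood explicitly: since $\pexp{s ; \lambda} = \lambda\exp(-\lambda s)$,
\[
-\log\pexp{s_i ; \beta\alpha^{\ell_i}} = -\log\beta - \ell_i\log\alpha + \beta\alpha^{\ell_i}s_i ,
\]
so that, summing and using $\pen{x, y} = \max(y - x, 0)\gamma\log n$,
\[
\scoreexp{L, S; \alpha, \beta, \gamma} = -n\log\beta - m\log\alpha + \beta W + d\,\gamma\log n ,
\]
where I abbreviate $W = \sum_i \alpha^{\ell_i} s_i$, $m = \sum_i \ell_i$, and $d = \sum_i \max(\ell_i - \ell_{i-1}, 0)$ (all functions of $L$). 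For fixed $L$ this is strictly convex in $\beta$ on $(0, \infty)$, and $\partial_\beta = -n/\beta + W = 0$ gives the unique minimizer $\beta = n/W$, which is precisely the value appearing in the statement; here I assume the delays are not all zero so that $W > 0$.

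Substituting $\beta = n/W$ gives $\beta W = n$ and $-n\log\beta = n\log W - n\log n$, so the score minimized over $\beta$ equals
\[
g(L) = n\log W - m\log\alpha + d\,\gamma\log n + (n - n\log n) ,
\]
whose final bracket is independent of $L$. Thus $\brstexpprb(\alpha)$ is equivalent to minimizing $g$ over admissible $L$. The key observation is that once the integers $m$ and $d$ are held fixed, the only $L$-dependent term left in $g$ is $n\log W$; as $\log$ is increasing, minimizing $g$ subject to $\sum_i \ell_i = m$ and $\sum_i \max(\ell_i - \ell_{i-1}, 0) = d$ coincides with minimizing $W = \sum_i \alpha^{\ell_i} s_i$ under the very same constraints, which is exactly \boundprb.

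This dictates the right budgets. I would take an optimal pair $(L^*, \beta^*)$ for $\brstexpprb(\alpha)$, set $d = d(L^*)$ and $m = m(L^*)$, and let $L'$ solve \boundprb for these budgets. Since $L^*$ is itself feasible for this instance, $W(L') \leq W(L^*)$; and because feasibility forces $m(L') = m(L^*)$ and $d(L') = d(L^*)$, the two competing values of $g$ differ only in the $n\log W$ term, giving $g(L') \leq g(L^*)$. Optimality of $L^*$ then forces equality, so $L'$ also minimizes $g$ and hence solves $\brstexpprb(\alpha)$, with optimal rate $\beta = n/W(L') = n/\sum_i s_i\alpha^{\ell_i'}$ as claimed.

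Finally I would bound the budgets. Trivially $m = \sum_i \ell_i \leq kn$. For $d$, telescoping gives $\sum_{i=1}^n (\ell_i - \ell_{i-1}) = \ell_n - \ell_0 = \ell_n$, hence the total ascent is $d = \tfrac12\!\left(\sum_{i=1}^n |\ell_i - \ell_{i-1}| + \ell_n\right)$; since each $|\ell_i - \ell_{i-1}| \leq k$ and $\ell_n \leq k$, this yields $d \leq \tfrac12(kn + k) = k(n+1)/2$. The conceptual crux is the profiling step, which exposes $W$ as the sole remaining degree of freedom once $(m, d)$ are fixed; the main points requiring care are the \emph{direction} of the existence claim---reading the budgets off an optimum of $\brstexpprb(\alpha)$ rather than guessing them---and the telescoping bound on $d$.
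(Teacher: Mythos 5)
Your proof is correct and takes essentially the same route as the paper: the same decomposition of the score into $\beta W - n\log\beta - m\log\alpha + d\,\gamma\log n$, the same first-order condition yielding $\beta = n/W$, and the same telescoping bound $2d = \sum_i\abs{\ell_i - \ell_{i-1}} + \ell_n \leq k(n+1)$. The only difference is organizational: you profile out $\beta$ and prove the implication in the direction stated (a \boundprb optimum, with budgets read off an optimum of $\brstexpprb(\alpha)$, solves $\brstexpprb(\alpha)$), whereas the paper argues by an exchange/contradiction at fixed $\beta$ that the $\brstexpprb(\alpha)$ optimum solves \boundprb; your version is slightly more faithful to the statement and handles possible ties among \boundprb optima explicitly.
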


We can solve \boundprb with a dynamic program. In order to do this, let us
define a table $o$, where an entry $o[i, j, a, b]$ is the optimal score of the first $i$ symbols
of the input sequence such that 
\[
	\ell_i = j,\ 
	\sum_{x = 1}^i \max\pr{\ell_x - \ell_{x - 1}, 0} = a, \ \text{and}\ 
	\sum_{x = 1}^i \ell_x = b\quad.
\]
In case, there is no level sequence satisfying the constraints, we set $o[i, j, a, b] = \infty$. 
Due to Proposition~\ref{prop:boundprb}, we can limit $a \leq k(n + 1)/2$ and $b \leq kn$.
Consequently, $o$ contains $\bigO{n^3k^3}$ entries.
We can compute a single entry with 
\begin{equation}
\begin{split}
	 &o[i, j, a, b]  \\
	 &\quad =  \alpha^j s_i  + \min_{j'} o[i - 1, j', a - \max(0, j - j'), b - j]\quad.
\label{eq:exactdp}
\end{split}
\end{equation}
The computation of a single value thus requires $\bigO{k}$ time. So computing the whole table
can be done in $\bigO{n^3k^4}$. Moreover, if we also store the optimal $j'$
as given in Equation~\ref{eq:exactdp}, for each cell, we can recover the level
sequence responsible for every $o[i, j, a, b]$. 

Proposition~\ref{prop:boundprb} now guarantees that we can solve \brstexpprb by
comparing the level sequences responsible for $o[n, j, a, b]$, where $j = 0, \ldots, k$, $a = 0, \ldots, (k + 1)n/2$,
and $b = 0, \ldots, kn$.

\section{Approximating discrete burstiness}\label{sec:geoburst}

In this section we will provide a $(1 + \epsilon)$-approximation algorithms for
$\brstgeoprb(\alpha)$ and $\brstgeoprb$. The time complexities are stated in Table~\ref{tab:algos}.

\subsection{Approximating $\brstgeoprb(\alpha)$}

Note that if we knew the optimal $\beta$, then $\brstgeoprb(\alpha)$ reduces to Problem~\ref{prb:burstgeoorig},
which we can solve in $\bigO{nk}$ time by applying \algdp.
The idea behind our approximation is 
to test several values of $\beta$, and select the best solution among the tested values.
The trick is to select values densely enough so that we can obtain $(1 + \epsilon)$ guarantee
while keeping the number of tests low, namely $\bigO{\epsilon^{-1}\log \log n}$.
The pseudo-code of the algorithm is given in Algorithm~\ref{alg:geoapproxalpha}.

\begin{algorithm}[ht]
	$\mu \define  \frac{1}{n}\sum_{i} s_i$\;
	\lIf {$\mu = 0$} {
		\Return $L = \enpr{0}{0}$ 
	}
	$\eta \define \mu /(\mu + 1)$\;
	$c \define 1$\;
	\While {$\eta^c \leq  \mu/(\mu + 1/n)$} {
		$\beta \define \eta^c$\;
		$L \define \algdp(S, \alpha, \beta, \gamma, k, \pgeo{})$\;
		$c \define c / (1 + \epsilon)$\;
	}
	\Return the best observed $L$\;
\caption{$\alggeoalpha(S, \alpha, \gamma, k, \epsilon)$}
\label{alg:geoapproxalpha}
\end{algorithm}

Next we state that the algorithm indeed yields an
$(1 + \epsilon)$-approximation ratio, and can be executed in
$\bigO{\epsilon^{-1}nk\log \log n}$ time. 
The proofs are given in Appendix~\ref{sec:app_geoapproxalpha}--\ref{sec:app_geotimealpha}. 

\begin{proposition}
\label{prop:geoapproxalpha}
Let $S$ be an integer delay sequence, and let $\alpha$, $\gamma$, and $k$ be the parameters.
Let $L^*$, $\beta^*$ be the solution to $\brstgeoprb(\alpha)$.  
Assume $\epsilon > 0$.
Let $L$, $\beta$ be the solution returned by $\alggeoalpha(S, \alpha, \gamma, k, \epsilon)$.  
Then 
\[
	\scoregeo{S, L; \beta} \leq (1 + \epsilon)\scoregeo{S, L^*; \beta^*}\quad.
\]
\end{proposition}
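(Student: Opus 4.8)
The plan is to exhibit, among the values of $\beta$ tested by $\alggeoalpha$, one value $\hat\beta$ for which the \emph{fixed} optimal level sequence $L^*$ already attains score at most $(1+\epsilon)\scoregeo{S, L^*; \beta^*}$. Since $\algdp$ returns the optimal level sequence for each tested $\beta$, the pair found at $\hat\beta$ is at least as good, and the algorithm reports the best observed pair, so the bound transfers to the output. Writing $\lambda_i = \beta\alpha^{\ell_i}$ and expanding $-\log\pgeo{s_i;\lambda_i} = -\log\fpr{1-\lambda_i} - s_i\log\lambda_i$, I would split the score of $L^*$ as a function of $\beta$ into $\phi(\beta) = \scoregeo{S, L^*; \beta} = A(\beta) + B(\beta) + P$, where $A(\beta) = \sum_i -\log\fpr{1-\beta\alpha^{\ell_i}}$, where $B(\beta) = -n\mu\log\beta + B_0$ with $B_0 = -(\log\alpha)\sum_i s_i\ell_i \geq 0$ (as $\alpha<1$), and where $P$ is the $\beta$-independent penalty. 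For $\beta\in(0,1)$ all three parts are non-negative.

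The first step is to pin $\beta^*$ to the interval swept by the loop. I would first argue we may assume $\min_i \ell^*_i = 0$: subtracting $m = \min_i \ell^*_i$ from every level and rescaling $\beta^* \mapsto \beta^*\alpha^{m}$ preserves every $\lambda_i$ (hence the likelihood) while not increasing the penalty, so an optimum with minimum level $0$ exists. Since $\phi$ is convex and tends to $+\infty$ as $\beta\to 0^+$ (some $s_i>0$ when $\mu>0$) and as $\beta\to 1^-$ (the term with $\ell^*_i = 0$), its minimizer $\beta^*$ is interior and satisfies $\phi'(\beta^*)=0$, which rearranges to $\sum_i \lambda_i/(1-\lambda_i) = n\mu$. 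Using $\lambda_i \leq \beta^*$ and monotonicity of $x\mapsto x/(1-x)$ gives $n\mu \leq n\beta^*/(1-\beta^*)$, hence $\beta^* \geq \mu/(\mu+1) = \eta$; retaining only the term with $\ell^*_i = 0$ gives $\beta^*/(1-\beta^*) \leq n\mu$, hence $\beta^* \leq \mu/(\mu+1/n)$.

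Because the smallest tested value is $\eta \leq \beta^*$, there is a largest tested value $\hat\beta \leq \beta^*$, and I would check it satisfies $-\log\hat\beta \leq (1+\epsilon)(-\log\beta^*)$. The successive tested values have $-\log\beta = c\,(-\log\eta)$ shrinking by exactly the factor $(1+\epsilon)$ each iteration, so either the next tested value already exceeds $\beta^*$ (and the factor is immediate) or the loop has terminated, in which case the bound $\beta^* \leq \mu/(\mu+1/n)$ from the previous step forces the same inequality. This boundary case is the fiddliest point, and it is precisely where the range bound on $\beta^*$ is indispensable.

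Finally I would combine the pieces. Since $\hat\beta \leq \beta^*$, each $1-\hat\beta\alpha^{\ell_i} \geq 1-\beta^*\alpha^{\ell_i}$, so $A(\hat\beta) \leq A(\beta^*)$; and $-\log\hat\beta \leq (1+\epsilon)(-\log\beta^*)$ together with $B_0 \geq 0$ yields $B(\hat\beta) \leq (1+\epsilon)B(\beta^*) - \epsilon B_0 \leq (1+\epsilon)B(\beta^*)$. Using $A(\beta^*), P \geq 0$,
\[
\phi(\hat\beta) \leq A(\beta^*) + (1+\epsilon)B(\beta^*) + P \leq (1+\epsilon)\phi(\beta^*).
\]
As $\algdp$ at $\hat\beta$ returns a level sequence of score at most $\phi(\hat\beta)$ and the algorithm outputs the best observed pair, we conclude $\scoregeo{S, L; \beta} \leq (1+\epsilon)\scoregeo{S, L^*; \beta^*}$; the $\mu=0$ case is handled separately by the explicit branch in the algorithm. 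I expect the main obstacle to be the direction of the grid: one must use the tested value \emph{below} $\beta^*$, so that the awkward $-\log\fpr{1-\lambda_i}$ terms only decrease and only the term linear in $-\log\beta$ consumes the multiplicative slack — paired with the range bound that keeps $\beta^*$ inside the swept interval.
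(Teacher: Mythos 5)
Your proposal is correct and follows essentially the same route as the paper: your range bound on $\beta^*$ via the first-order condition is the paper's Lemma~\ref{lem:geobounds}, and your decomposition into the non-negative part $A(\beta)+P$ (which only decreases when moving to the tested $\hat\beta \leq \beta^*$) plus the term linear in $-\log\beta$ (which absorbs the $(1+\epsilon)$ slack) is exactly the paper's Lemma~\ref{lem:geoapprox} specialized to fixed $\alpha$. Your extra care about the minimum-level-zero reduction and the interior minimizer actually fills in details the paper states without justification.
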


\begin{proposition}
\label{prop:geotimealpha}
The computational complexity of
\alggeoalpha is $\bigO{\epsilon^{-1}nk \log \log n}$.
\end{proposition}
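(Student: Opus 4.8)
The plan is to write the running time as (number of iterations of the loop) $\times$ (cost of one iteration), and to spend essentially all the effort on bounding the number of iterations by $\bigO{\epsilon^{-1}\log\log n}$.

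First I would account for the cost of a single iteration. The preprocessing (computing $\mu$ and the degenerate return when $\mu = 0$) costs $\bigO{n}$ and happens once. Inside the loop, the dominant operation is the call to \algdp, which by the speed-up recalled in the preliminaries runs in $\bigO{nk}$ time; updating $c$, recomputing $\eta^c$, and keeping the best $L$ are all $\bigO{1}$. Thus the total time is $\bigO{nk}$ multiplied by the number of iterations, and it remains only to count them.

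Next I would count the iterations. The loop evaluates $c_j = (1+\epsilon)^{-j}$ for $j = 0, 1, 2, \dots$ and halts as soon as $\eta^{c_j} > \mu/(\mu + 1/n)$. Since $0 < \eta < 1$, taking logarithms rewrites the guard $\eta^{c_j} \le \mu/(\mu+1/n)$ as $c_j \ge T$, where $T = \ln(\mu/(\mu+1/n)) / \ln(\mu/(\mu+1))$; both logarithms are negative, so $T \in (0,1)$, and at $j=0$ we have $c_0 = 1 \ge T$, so the loop is indeed entered. Because the $c_j$ decrease geometrically with ratio $1/(1+\epsilon)$, the number of $j$ satisfying $c_j \ge T$ is $\lceil \log_{1+\epsilon}(1/T) \rceil + \bigO{1}$. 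Using $\ln(1+\epsilon) \ge \epsilon/(1+\epsilon)$ to change the base, this is $\bigO{\epsilon^{-1}\log(1/T)}$, which already isolates the $\epsilon^{-1}$ factor.

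The remaining and, I expect, \emph{hardest} step is to show $\log(1/T) = \bigO{\log\log n}$. Rewriting the two logarithms via $\ln(x/(x+y)) = -\ln(1+y/x)$ gives $1/T = \ln(1 + 1/\mu) / \ln(1 + 1/(n\mu))$, which I would bound using the elementary estimates $\tfrac{x}{1+x} \le \ln(1+x) \le x$. This is where I anticipate the main obstacle: a crude use of these bounds only yields $1/T = \bigO{n}$, hence $\bigO{\epsilon^{-1}\log n}$ iterations, whereas the claim requires the sharper $\log(1/T) = \bigO{\log\log n}$. Closing this gap means exploiting the doubly-exponential spacing of the tested parameters $\beta = \eta^{(1+\epsilon)^{-j}}$ together with the precise form of the stopping threshold $\mu/(\mu+1/n)$, rather than the raw ratio of the interval endpoints; I would concentrate the argument there and separately verify the boundary regimes (very small $\mu$, and termination of the loop). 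Multiplying the resulting $\bigO{\epsilon^{-1}\log\log n}$ iteration bound by the $\bigO{nk}$ per-iteration cost then gives the stated $\bigO{\epsilon^{-1}nk\log\log n}$.
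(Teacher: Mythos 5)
Your skeleton coincides exactly with the paper's proof: a one-time $\mathcal{O}(n)$ preprocessing, an $\mathcal{O}(nk)$ call to \algdp per iteration, and an iteration count of order $\log_{1+\epsilon}(1/T)$ with $1/T = \ln(1+1/\mu)/\ln\fpr{1+1/(n\mu)}$. This is precisely the content of the paper's Lemma~\ref{lem:georounds}, which writes the same quantity as $r\log c \leq h(\mu,1)-h(\mu,1/n)$ for $h(x,y)=\log\log((y+x)/x)$, i.e.\ $\log(1/T) = h(\mu,1)-h(\mu,1/n)$. The step you leave open---showing $\log(1/T)=\mathcal{O}(\log\log n)$---is therefore a genuine gap: it is the entire mathematical content of the proposition, and the paper spends two lemmas on it. The paper's route is a monotonicity claim (Lemma~\ref{lem:difffrac}): for $y_1\leq y_2$ the difference $h(x,y_2)-h(x,y_1)$ is non-increasing in $x$, so $\mu\geq 1/n$ lets one replace $\mu$ by the extremal value $1/n$, giving $\log(1/T)\leq h(1/n,1)-h(1/n,1/n)=\log\log(n+1)-\log\log 2$.

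However, your instinct that this is where the proof gets stuck is better founded than you may realize: the paper's monotonicity lemma is false, with the inequality going the other way. One has $\partial h/\partial x = -(1/x^2)\,\varphi(y/x)$ where $\varphi(t)=t/\fpr{(1+t)\ln(1+t)}$, and $\varphi$ is decreasing (since $\fpr{(1+t)\ln(1+t)/t}' = (t-\ln(1+t))/t^2>0$), so $\partial h/\partial x$ is \emph{increasing} in $y$, hence $h(x,y_2)-h(x,y_1)$ is non-decreasing in $x$. Numerically, with $n=100$ and $\mu=1$: $h(\mu,1)-h(\mu,1/n)=\ln\fpr{\ln 2/\ln 1.01}\approx 4.24$, whereas $h(1/n,1)-h(1/n,1/n)=\ln\fpr{\ln 101/\ln 2}\approx 1.90$, contradicting the lemma. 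Correspondingly, for the all-ones delay sequence ($\mu=1$) your own formula gives $1/T=\ln 2/\ln(1+1/n)\approx n\ln 2$, so \alggeoalpha genuinely performs $\Theta(\epsilon^{-1}\log n)$ calls to \algdp: your ``crude'' bound is tight, and the sharper $\log\log n$ bound you were trying to reach is unattainable outside the regime $\mu=\mathcal{O}(\mathrm{polylog}(n)/n)$ containing the paper's extremal case $\mu=1/n$. So the honest conclusion is twofold: as a proof of the stated proposition your proposal is incomplete, but the stated proposition---and the paper's Lemmas~\ref{lem:difffrac}--\ref{lem:georounds}---do not hold as written; what your argument (and a corrected version of the paper's) actually establishes is a running time of $\mathcal{O}(\epsilon^{-1}nk\log n)$.
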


\subsection{Approximating $\brstgeoprb$}

We now turn to approximating \brstgeoprb. The approach here is
similar to the previous approach: we test multiple values
of $\alpha$ and invoke \alggeoalpha. The pseudo-code for the algorithm
is given in Algorithm~\ref{alg:geoapprox}.

\begin{algorithm}[ht]
	$L \define \alggeoalpha(S, 0, \gamma, k, \epsilon)$\;
	$\mu \define  \frac{1}{n}\sum_{i} s_i$\;
	$\eta \define 1 /(1 + nk)$\;
	$\sigma \define \mu/(\mu + 1/n)$\;
	$c \define 1$\;
	\While {$\eta^c \leq  \sigma^{\epsilon / k}$} {
		$\alpha \define \eta^c$\;
		$L \define \alggeoalpha(S, \alpha, \gamma, k, \epsilon)$\;
		$c \define c / (1 + \epsilon)$\;
	}
	\Return the best observed $L$\;
\caption{$\alggeo(S, \gamma, k, \epsilon)$}
\label{alg:geoapprox}
\end{algorithm}

Next we establish the correctness of the method as well as the running time.
The proofs are given in Appendix~\ref{sec:app_geoapprox}--\ref{sec:app_geotime}. 

\begin{proposition}
\label{prop:geoapprox}
Let $L^*$, $\alpha^*$, $\beta^*$ be the solution to \brstgeoprb.
Assume $\epsilon > 0$.
Let $L$, $\alpha$, $\beta$ be the solution returned by $\alggeo(S, \gamma, k, \epsilon)$.
Then
\[
    \scoregeo{S, L; \alpha, \beta} \leq (1 + \epsilon)\scoregeo{S, L^*; \alpha^*, \beta^*}\quad.
\]
\end{proposition}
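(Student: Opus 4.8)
The plan is to exploit a structural feature of \alggeo: for every tested pair $(\alpha,\beta)$ it calls \algdp, which returns the \emph{exactly} optimal level sequence for those parameters, and the inner $\beta$-grid swept by \alggeoalpha does not depend on $\alpha$. Hence the score of the returned solution equals $\min_{(\alpha,\beta)}\min_L\scoregeo{S,L;\alpha,\beta}$ over the Cartesian product of the $\alpha$-grid and the $\beta$-grid. It therefore suffices to exhibit a single grid pair $(\alpha',\beta')$ for which the globally optimal sequence $L^*$ already satisfies $\scoregeo{S,L^*;\alpha',\beta'}\le(1+\epsilon)\scoregeo{S,L^*;\alpha^*,\beta^*}$, since the returned solution can only be better. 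I would deliberately avoid invoking Proposition~\ref{prop:geoapproxalpha} as a black box here: re-optimizing $\beta$ for a perturbed $\alpha$ would multiply two separate $(1+\epsilon)$ factors, whereas a direct joint bound keeps the guarantee clean.

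The technical core is a decomposition that decouples the two parameters. Fix the levels $\ell_i$ of $L^*$, write $\lambda_\ell=\beta\alpha^\ell$, and use $0<\alpha,\beta<1$ so that $\log\lambda_\ell<0$. Then
\[
\sum_i -\log\pgeo{s_i;\lambda_{\ell_i}}=A(\alpha,\beta)+B(\alpha,\beta),
\]
where $A(\alpha,\beta)=n\mu\abs{\log\beta}+\abs{\log\alpha}\sum_i\ell_i s_i$ gathers the $-s_i\log\lambda_{\ell_i}$ terms (using $\sum_i s_i=n\mu$) and $B(\alpha,\beta)=\sum_i-\log(1-\lambda_{\ell_i})$. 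Since the penalty $P=\sum_i\pen{\ell_{i-1},\ell_i}$ depends only on $L^*$, we have $\scoregeo{S,L^*;\alpha,\beta}=A+B+P$ with all three terms nonnegative. The value of the split is that $A$ is \emph{linear} in $\abs{\log\alpha}$ and $\abs{\log\beta}$, while $B$ is \emph{monotone}: lowering either $\alpha$ or $\beta$ lowers every $\lambda_\ell$ and hence lowers $B$.

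I would then take $\alpha'$ and $\beta'$ to be the grid points immediately below $\alpha^*$ and $\beta^*$. Both grids are geometric in the exponent $c$ with ratio $1+\epsilon$, so consecutive values have $\abs{\log}$ differing by exactly a factor $1+\epsilon$; this yields $\alpha'\le\alpha^*$ with $\abs{\log\alpha'}\le(1+\epsilon)\abs{\log\alpha^*}$, and likewise for $\beta'$. Linearity then gives $A(\alpha',\beta')\le(1+\epsilon)A(\alpha^*,\beta^*)$, monotonicity gives $B(\alpha',\beta')\le B(\alpha^*,\beta^*)$, and $P$ is unchanged, so
\[
\scoregeo{S,L^*;\alpha',\beta'}\le(1+\epsilon)A(\alpha^*,\beta^*)+B(\alpha^*,\beta^*)+P\le(1+\epsilon)\scoregeo{S,L^*;\alpha^*,\beta^*},
\]
the desired bound without any compounding of constants.

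The main obstacle is guaranteeing that such grid points exist, i.e.\ that $(\alpha^*,\beta^*)$ lies inside the range the grids actually sweep. For $\beta^*$ I would reuse the bound from the analysis of \alggeoalpha, $\mu/(\mu+1)\le\beta^*\le\mu/(\mu+1/n)$, which is exactly the $\beta$-loop range; the lower bound holds because the level-$0$ delays are the longest present and their maximum-likelihood rate is at least the global one $\mu/(\mu+1)$. For $\alpha^*$ the loop covers $[1/(1+nk),\sigma^{\epsilon/k}]$ with $\sigma=\mu/(\mu+1/n)$, and two boundary regimes need separate treatment. If $\alpha^*>\sigma^{\epsilon/k}$ (so $\alpha^*$ is very close to $1$), then $\abs{\log\alpha^*}\sum_i\ell_i s_i\le\abs{\log\alpha^*}\,k\,n\mu<\epsilon\,n\mu\abs{\log\sigma}$, so the burst contribution is negligible and the all-zero sequence at $\beta=\mu/(\mu+1)$---a value \alggeoalpha tests, and against which \algdp can only improve---is already within $(1+\epsilon)$ of $\mathit{OPT}$; if $\alpha^*<1/(1+nk)$, such a tiny $\alpha$ is only beneficial on zero-delay positions, the regime captured by the initial $\alggeoalpha(S,0,\gamma,k,\epsilon)$ call. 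Making these two boundary arguments airtight---in particular turning the additive $O(\epsilon)$ slack in the $\alpha\to1$ case into a multiplicative one using a positive lower bound on $\scoregeo{}$---is the step I expect to require the most care.
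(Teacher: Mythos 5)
Your core machinery coincides with the paper's: your $A+B+P$ decomposition (linear in $\abs{\log\alpha},\abs{\log\beta}$, plus a part monotone in $(\alpha,\beta)$, plus the fixed penalty) is exactly the paper's joint perturbation lemma (Lemma~\ref{lem:geoapprox}), your choice of the grid pair immediately below $(\alpha^*,\beta^*)$ evaluated at $L^*$ is exactly how the paper's main proof concludes with a single $(1+\epsilon)$ factor, and the $\beta$-range you cite is the paper's Lemma~\ref{lem:geobounds}. So on the event that $(\alpha^*,\beta^*)$ lies inside the swept grids, your argument is the paper's argument.

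The genuine gap is your handling of the boundary case $\alpha^*>\sigma^{\epsilon/k}$. Comparing OPT against the \emph{all-zero} level sequence at $\beta=\mu/(\mu+1)$ is not justified by your ``burst contribution is negligible'' bound: that bound controls only the part of the score linear in $\abs{\log\alpha}$. Zeroing the levels also replaces every term $-\log(1-\beta\alpha^{\ell_i})$ by $-\log(1-\beta)$, and since $\alpha<1$ each $\lambda_{\ell_i}=\beta\alpha^{\ell_i}$ \emph{rises} to $\beta$, so these terms \emph{increase}; nothing in your sketch bounds this increase against OPT, and it cannot be bounded multiplicatively on its own, since the per-term ratio $\frac{-\log(1-\beta)}{-\log(1-\beta\alpha^{\ell_i})}$ behaves like $\beta^{-\epsilon}$ for small $\beta$ (with $\beta^*$ as small as $\approx 1/(n+1)$, that is a factor $\approx n^{\epsilon}$, not $1+\epsilon$). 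This is precisely why the paper argues differently (Lemma~\ref{lem:geoapproxupper}): keep $L^*$ and $\beta^*$ and merely \emph{clamp $\alpha^*$ down} to the grid top $\sigma^{\epsilon/k}$. Lowering $\alpha$ makes all $-\log(1-\beta\alpha^{\ell_i})$ terms decrease, and the only increase, $\abs{\log \sigma^{\epsilon/k}}\sum_i \ell_i s_i \leq \epsilon \sum_i s_i \abs{\log\beta^*}$, is at most $\epsilon$ times a term already present in OPT; no lower bound on $\scoregeo{}$ and no additive-to-multiplicative conversion is needed. A secondary, smaller gap: for the lower boundary you posit a case $0<\alpha^*<1/(1+nk)$ to be absorbed by the $\alpha=0$ call, but what actually needs proving (and what the paper proves in Lemmas~\ref{lem:geoapproxlower} and~\ref{lem:special}) is a dichotomy that this case is vacuous: either some $s_j\ell_j\neq 0$, forcing $\alpha^*\geq 1/(1+nk)$, or all $s_i\ell_i=0$, forcing $\alpha^*=0$ exactly, which the initial $\alggeoalpha(S,0,\gamma,k,\epsilon)$ call covers via Proposition~\ref{prop:geoapproxalpha}.
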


\begin{proposition}
\label{prop:geotime}
The computational complexity of
\alggeo is 
\[
\bigO{nk \log \log n(\log n + \log \mu + \log k - \log\epsilon)\epsilon^{-2}}\quad.
\]
\end{proposition}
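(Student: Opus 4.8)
The plan is to bound the total running time as (number of calls to \alggeoalpha) times (cost of a single call). By Proposition~\ref{prop:geotimealpha}, each invocation of \alggeoalpha runs in $\bigO{\epsilon^{-1}nk\log\log n}$ time, so it suffices to count the invocations. There is one initial call with $\alpha = 0$, plus one call per iteration of the \textbf{while} loop; hence I only need to bound the number of loop iterations $T$ and then multiply.

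First I would track the values taken by the counter $c$. Starting from $c = 1$ and updating $c \define c/(1+\epsilon)$, after $t$ iterations we have $c = (1+\epsilon)^{-t}$. Since $\eta = 1/(1+nk) \in (0,1)$ and $\sigma = \mu/(\mu + 1/n) \in (0,1)$, I would take logarithms of the loop guard $\eta^c \leq \sigma^{\epsilon/k}$. Both $\ln\eta$ and $\ln\sigma$ are negative, and dividing by $\ln\eta < 0$ reverses the inequality, turning the guard into $c \geq c_{\min}$ with $c_{\min} = \frac{\epsilon\ln\sigma}{k\ln\eta} > 0$. The loop therefore executes exactly for those $t$ with $(1+\epsilon)^{-t} \geq c_{\min}$, giving $T \leq 1 + \ln(1/c_{\min})/\ln(1+\epsilon)$. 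Using $\ln(1+\epsilon) = \Omega(\epsilon)$ for $\epsilon \le 1$ (e.g.\ $\ln(1+\epsilon) \ge \epsilon\ln 2$), this yields $T = \bigO{\epsilon^{-1}\ln(1/c_{\min})}$.

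The main work---and the one delicate step---is to show $\ln(1/c_{\min}) = \bigO{\log n + \log\mu + \log k - \log\epsilon}$. Writing out
\[
\frac{1}{c_{\min}} = \frac{k\ln(1+nk)}{\epsilon\,\ln\pr{1 + 1/(n\mu)}},
\]
taking logarithms splits this into $\log k + \log\log(1+nk) - \log\epsilon - \log\log\pr{1 + 1/(n\mu)}$. The term $\log\log(1+nk)$ is $o(\log n + \log k)$ and is absorbed. The subtle term is $-\log\log\pr{1 + 1/(n\mu)}$: here I would use $\ln(1+x) \geq x/(1+x)$ with $x = 1/(n\mu)$, so $\ln\pr{1+1/(n\mu)} \geq 1/(n\mu + 1)$ and hence $-\log\log\pr{1+1/(n\mu)} \leq \log(n\mu + 1) = \bigO{\log n + \log\mu}$. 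For an integer delay sequence with $\mu > 0$ we have $\sum_i s_i \geq 1$, so $n\mu \geq 1$ and the bound is clean; the degenerate case $\mu = 0$ is dispatched by the early return inside \alggeoalpha. Combining the three surviving terms gives the claimed estimate of $\ln(1/c_{\min})$.

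Finally I would multiply: $T = \bigO{\epsilon^{-1}(\log n + \log\mu + \log k - \log\epsilon)}$ invocations, each costing $\bigO{\epsilon^{-1}nk\log\log n}$, for a total of
\[
\bigO{\epsilon^{-2}nk\log\log n\,(\log n + \log\mu + \log k - \log\epsilon)},
\]
as stated. I expect the nested-logarithm estimate of $\ln(1/c_{\min})$ to be the only place requiring care; the remainder is bookkeeping of the geometric schedule of $c$.
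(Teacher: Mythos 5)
Your proposal is correct and follows essentially the same route as the paper's proof: bound the number of $\alpha$-tests from the geometric schedule of $c$ and the loop guard, handle the delicate term $\ln\ln\pr{1+1/(n\mu)}$ via a first-order logarithm bound (your $\ln(1+x) \geq x/(1+x)$ is exactly the paper's $\log\sigma \leq \sigma - 1$ applied to the same quantity), and multiply by the per-call cost from Proposition~\ref{prop:geotimealpha}. Your explicit treatment of the degenerate cases ($\mu = 0$ and $n\mu \geq 1$ for integer sequences) is a minor point the paper leaves implicit, but the argument is otherwise identical.
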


\section{Approximating continuous burstiness}\label{sec:expburst}

In this section we will provide a $(1 + \epsilon)$-approximation algorithms for
$\brstexpprb(\alpha)$ and $\brstexpprb$. The time complexities are stated in Table~\ref{tab:algos}.

\subsection{Approximating $\brstexpprb(\alpha)$}
In this section we introduce an approximation algorithm for $\brstexpprb(\alpha)$.  The
general approach of this algorithm is the same as in \alggeoalpha: we test several
values of $\beta$, solve the resulting subproblem with \algdp, and select the
best one. The pseudo-code is given in
Algorithm~\ref{alg:expalpha}.

\begin{algorithm}[ht]
	$\mu \define  \frac{1}{n}\sum_{i} s_i$\;
	$\beta \define 1 / \mu$\;
	\While {$\beta \geq  1/(\alpha^k\mu)$} {
		$L \define \algdp(S, \alpha, \beta, \gamma, k, \pexp{})$\;
		$\beta \define \beta / (1 + \epsilon)$\;
	}
	\Return the best observed $L$ and $\beta$\;
\caption{$\algexpalpha(S, \alpha, \gamma, k, \epsilon)$}
\label{alg:expalpha}
\end{algorithm}

Unlike with \alggeoalpha, \algexpalpha does not yield an unconditional $(1 +
\epsilon)$-approximation guarantee. The key problem is that since exponential
distribution is continuous, the term $\pexp{s; \lambda}$ may be larger than
$1$.  Consequently, $-\log \pexp{s; \lambda}$, as well as the actual score
$\scoreexp{}$, can be negative.  However, if the delay sequence has a geometric
mean larger or equal than 1, we can guarantee the approximation ratio.

The proofs for the next two propositions are given in
Appendix~\ref{sec:app_expapproxalpha}--\ref{sec:app_exptimealpha}.

\begin{proposition}
\label{prop:expapproxalpha}
Assume a delay sequence $S$, and parameters $\alpha$ and $\gamma$,
and an upper bound for levels $k$.
Let $\beta^*$ and $L^*$ be the solution to $\brstexpprb(\alpha(\alpha))$.
Let $g = \spr{\prod_{i} s_i}^{1/n}$ be the geometric mean. 
Assume $\epsilon > 0$.
Let $L$, $\beta$ be the solution returned by \algexpalpha.
Then
\[
	\scoreexp{S, L; \beta} - n \log g \leq  (1 + \epsilon)(\scoreexp{S, L^*; \beta^*} - n \log g)\quad.
\]
Moreover, if $g \geq 1$, then
\[
	\scoreexp{S, L; \beta} \leq  (1 + \epsilon)\scoreexp{S, L^*; \beta^*} \quad.
\]
\end{proposition}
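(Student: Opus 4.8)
The plan is to exploit the fact that, for a \emph{fixed} level sequence $L$, the score $\scoreexp{S, L; \beta}$ is a convex function of $\beta$ with a closed-form minimizer. Writing $-\log \pexp{s; \lambda} = -\log\lambda + \lambda s$ and collecting the $\beta$-free terms into a constant $C_L$, we have, for fixed $L$,
\[
    \scoreexp{S, L; \beta} = -n\log\beta + \beta \sum_i \alpha^{\ell_i} s_i + C_L.
\]
Differentiating shows the minimizer is $\beta(L) = n / \sum_i \alpha^{\ell_i} s_i$, matching Proposition~\ref{prop:boundprb}. Since $0 \le \ell_i \le k$ and $\alpha > 1$, we have $\sum_i s_i \le \sum_i \alpha^{\ell_i} s_i \le \alpha^k \sum_i s_i$, so the optimal $\beta^*$ for $L^*$ satisfies $1/(\alpha^k\mu) \le \beta^* \le 1/\mu$ --- exactly the interval swept by \algexpalpha.

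First I would show that some tested value $\hat\beta$ is close to $\beta^*$. Because the loop multiplies $\beta$ by $1/(1+\epsilon)$ across $[1/(\alpha^k\mu), 1/\mu]$, there is a tested $\hat\beta$ with $\hat\beta/\beta^* \in [1/(1+\epsilon), 1+\epsilon]$; the only delicate point is when $\beta^*$ falls below the smallest tested value, where one checks that the stopping condition forces that value to lie within a factor $(1+\epsilon)$ of $1/(\alpha^k\mu)$, hence within $(1+\epsilon)$ of $\beta^*$. Since \algdp solves $\brstexpprb(\alpha, \beta)$ exactly for the fixed $\hat\beta$, the sequence it returns has score at most $\scoreexp{S, L^*; \hat\beta}$, and the best observed output satisfies $\scoreexp{S, L; \beta} \le \scoreexp{S, L^*; \hat\beta}$.

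The heart of the argument is two estimates. Using $\beta(L^*) = \beta^*$ and $\beta^*\sum_i\alpha^{\ell^*_i}s_i = n$, the gap between evaluating $L^*$ at $\hat\beta$ versus $\beta^*$ collapses to
\[
    \scoreexp{S, L^*; \hat\beta} - \scoreexp{S, L^*; \beta^*} = n\pr{r - 1 - \log r}, \qquad r = \hat\beta/\beta^*,
\]
a nonnegative Bregman-type term that is monotone on each side of $r=1$, so checking the endpoints gives the bound $\epsilon n$ on $[1/(1+\epsilon), 1+\epsilon]$. For the lower bound I would rewrite, with $\lambda_i = \beta^*\alpha^{\ell^*_i}$ and $n\log g = \sum_i \log s_i$,
\[
    \scoreexp{S, L^*; \beta^*} - n\log g = \sum_i \pr{\lambda_i s_i - \log(\lambda_i s_i)} + \textrm{(penalty)} \ge n,
\]
using penalty nonnegativity and $x - \log x \ge 1$. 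Combining, $\scoreexp{S, L^*; \hat\beta} - \scoreexp{S, L^*; \beta^*} \le \epsilon n \le \epsilon\pr{\scoreexp{S, L^*; \beta^*} - n\log g}$, which rearranges to the claimed bound on the shifted objective. The main obstacle --- and the reason the $n\log g$ shift is unavoidable --- is that $\scoreexp{}$ can be negative for a continuous density, so an additive error cannot be absorbed into a multiplicative factor of the raw objective; shifting by $n\log g$ is precisely what guarantees the objective is at least $n$ and restores a multiplicative bound.

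Finally, for the \emph{moreover} statement I would expand the shifted inequality to get $\scoreexp{S, L; \beta} \le (1+\epsilon)\scoreexp{S, L^*; \beta^*} - \epsilon n\log g$, and when $g \ge 1$ the term $-\epsilon n\log g \le 0$ may simply be dropped, yielding the unshifted $(1+\epsilon)$-approximation.
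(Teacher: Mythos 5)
Your proof is correct, and it reaches the result by a genuinely different route than the paper's. You agree with the paper on the first step: your bracketing $1/(\alpha^k\mu) \leq \beta^* \leq 1/\mu$ and the grid argument are exactly the paper's Lemma~\ref{lem:expbounds}. From there the paper argues structurally: its Lemma~\ref{lem:geomean} uses the arithmetic--geometric mean inequality to show that the non-likelihood part of the score, $\sum_i -\log\beta^* - \ell_i^*\log\alpha$, is at least $\psi = n\log g$, and its Lemma~\ref{lem:expapprox} is a multiplicative perturbation bound: under an upward perturbation of $\beta$ (and $\alpha$) the data term $\beta F(\alpha)$ scales by at most $c$, the remaining part $C$ only decreases, and since $C - \psi \geq 0$ it too can be multiplied by $c$, giving the ratio on the shifted score. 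You instead exploit first-order optimality, $\beta^*\sum_i \alpha^{\ell_i^*} s_i = n$, to compute the perturbation loss \emph{exactly} as the Bregman-type quantity $n(r - 1 - \log r)$ with $r = \hat\beta/\beta^*$, bound it by $\epsilon n$ over the grid interval (your symmetric bound also covers $\hat\beta$ below $\beta^*$, which the paper's one-sided lemma does not need to), and then convert this additive error into a relative one via the pointwise inequality $x - \log x \geq 1$, which yields $\scoreexp{S, L^*; \beta^*} - n\log g \geq n$. Your route is more elementary (no AM--GM, and your lower bound holds for arbitrary $L$ and $\beta$, not only stationary ones) and quantitatively stronger: the gap you control is in fact $n\max\pr{\epsilon - \log(1+\epsilon),\, \log(1+\epsilon) - \frac{\epsilon}{1+\epsilon}} = O(n\epsilon^2)$ against a denominator of at least $n$, so you actually establish a $1 + O(\epsilon^2)$ ratio, meaning a coarser grid would suffice for the stated guarantee. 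What the paper's heavier Lemma~\ref{lem:expapprox} buys is reuse: it is stated for simultaneous perturbations of $\alpha$ (by a factor $(1+\epsilon_2)^k$) and $\beta$, which is precisely what the proof of Proposition~\ref{prop:expapprox} for \brstexpprb needs, whereas your exact-gap identity is tied to perturbing $\beta$ alone around its stationary point and would not extend as directly to the $\alpha$-dimension.
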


Note that if the geometric mean $g$ is less than 1, then we still have a
guarantee, except now we need to shift the score by a (positive) constant of $-n\log g$.

\begin{proposition}
\label{prop:exptimealpha}
The computational complexity of
\algexpalpha is $\bigO{\epsilon^{-1}nk^2 \log \alpha}$.
\end{proposition}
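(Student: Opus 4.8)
The plan is to compute the total running time as the product of the cost of a single iteration of the \texttt{while} loop in \algexpalpha and the number of iterations. First I would observe that every pass through the loop invokes \algdp exactly once, together with $\bigO{1}$ bookkeeping (dividing $\beta$ and updating the best observed solution). As recalled in Section~\ref{sec:prel}, the sped-up Viterbi dynamic program solves Problem~\ref{prb:burstexporig} for fixed $\alpha,\beta$ in $\bigO{nk}$ time, so each iteration costs $\bigO{nk}$. Hence the total running time is $\bigO{nk}$ times the number of iterations, which I denote $N$.

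Next I would count $N$. The variable $\beta$ is initialized to $1/\mu$ and is multiplied by $(1+\epsilon)^{-1}$ at the end of each iteration, so after $j$ iterations $\beta = \tfrac{1}{\mu (1+\epsilon)^{j}}$. The loop continues while $\beta \geq 1/(\alpha^{k}\mu)$, and substituting the expression for $\beta$ and rearranging, this condition is equivalent to $(1+\epsilon)^{j} \leq \alpha^{k}$, i.e.\ $j \leq \tfrac{k\log\alpha}{\log(1+\epsilon)}$. Therefore $N = \bigO{\tfrac{k\log\alpha}{\log(1+\epsilon)}}$.

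The only slightly delicate step is to replace $1/\log(1+\epsilon)$ by $\bigO{\epsilon^{-1}}$. I would use the elementary bound $\log(1+\epsilon) \geq \tfrac{\epsilon}{1+\epsilon}$, which gives $\tfrac{1}{\log(1+\epsilon)} \leq \tfrac{1+\epsilon}{\epsilon} = \bigO{\epsilon^{-1}}$ in the regime of small $\epsilon$ that is of interest (for large $\epsilon$ the iteration count only decreases, so the bound still holds). Combining this with the count above yields $N = \bigO{\epsilon^{-1} k \log\alpha}$.

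Multiplying the per-iteration cost $\bigO{nk}$ by $N = \bigO{\epsilon^{-1} k \log\alpha}$ gives the claimed complexity $\bigO{\epsilon^{-1} n k^{2} \log\alpha}$. Since the proposition is purely a running-time count, there is no substantial obstacle; the only care required is the logarithmic inequality bounding $1/\log(1+\epsilon)$ and the standard convention that $\epsilon$ is treated as small.
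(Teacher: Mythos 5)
Your proposal is correct and follows essentially the same route as the paper's own proof: bound the number of iterations by solving the loop condition $(1+\epsilon)^{-r}/\mu \geq 1/(\alpha^k\mu)$ to get $r \leq k\log\alpha/\log(1+\epsilon) \in \bigO{\epsilon^{-1}k\log\alpha}$, then multiply by the $\bigO{nk}$ cost of each \algdp call. Your explicit justification of $1/\log(1+\epsilon) \leq (1+\epsilon)/\epsilon$ is a detail the paper leaves implicit here (though it states it in the analogous geometric-case lemma), but the argument is the same.
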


\subsection{Approximating $\brstexpprb$}

We now turn to approximating \brstexpprb. The approach here is
similar to the previous approach: we test multiple values
of $\alpha$ and invoke \algexpalpha. The pseudo-code for the algorithm
is given in Algorithm~\ref{alg:expapprox}.

\begin{algorithm}[ht]
	$\alpha \define (\max s_i) / (\min s_i)$\;
	$c \define \sqrt[2k]{1 + \epsilon}$\;
	\While {$\alpha \geq  1$} {
		$L \define \algexpalpha(S, \alpha, \gamma, k, \epsilon / 2)$\;
		$\alpha \define \alpha / c$\;
	}
	\Return the best observed $L$\;
\caption{$\algexp(S, \gamma, k, \epsilon)$}
\label{alg:expapprox}
\end{algorithm}

Next we establish the correctness of the method as well as the running time.
The proofs given in
Appendix~\ref{sec:app_expapprox}--\ref{sec:app_exptime}.

\begin{proposition}
\label{prop:expapprox}
Assume a delay sequence $S$, a parameter $\gamma$,
and an upper bound for levels $k$.
Let $\alpha^*$, $\beta^*$ and $L^*$ be the solution to \brstexpprb.
Let $g = \spr{\prod_{i} s_i}^{1/n}$ be the geometric mean, 
and let $\psi = n \log g$.
Assume $\epsilon > 0$.
Let $L$, $\alpha$, $\beta$ the solution returned by \algexp.
Then
\[
	\scoreexp{S, L; \alpha, \beta} - \psi \leq  (1 + \epsilon)(\scoreexp{S, L^*; \alpha^*, \beta^*} - \psi)\quad.
\]
Moreover, if $g \geq 1$, then
\[
	\scoreexp{S, L; \alpha, \beta} \leq  (1 + \epsilon)\scoreexp{S, L^*; \alpha, \beta^*} \quad.
\]
\end{proposition}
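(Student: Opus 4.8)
The plan is to mirror the two-level structure of the algorithm. For a fixed $\alpha$, write $\mathit{OPT}(\alpha) = \min_{L,\beta}\scoreexp{S, L; \alpha, \beta}$ for the optimum of $\brstexpprb(\alpha)$, so that the optimum of \brstexpprb is $\mathit{OPT}(\alpha^*) = \scoreexp{S, L^*; \alpha^*, \beta^*}$. Since \algexp returns the best solution over all tested $\alpha$, and each call $\algexpalpha(S,\alpha,\gamma,k,\epsilon/2)$ obeys the shifted guarantee of Proposition~\ref{prop:expapproxalpha}, namely $\mathit{SOL}(\alpha)-\psi \le \pr{1+\tfrac{\epsilon}{2}}\pr{\mathit{OPT}(\alpha)-\psi}$, it suffices to exhibit a single tested value $\alpha$ with
\[
	\mathit{OPT}(\alpha) - \psi \le \pr{1 + \tfrac{\epsilon}{2}}\pr{\mathit{OPT}(\alpha^*) - \psi}.
\]
Composing the two factors (the choices $c=\sqrt[2k]{1+\epsilon}$ and the argument $\epsilon/2$ to \algexpalpha, together with $\log(1+\epsilon)\le\epsilon$, are calibrated so the product is at most $1+\epsilon$) then yields the claimed bound on the shifted score.

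First I would reduce the range of $\alpha^*$. The model uses $\alpha\ge 1$, and a change rate exceeding the dynamic range $\Omega/\omega$ (with $\Omega=\max_i s_i$, $\omega=\min_i s_i$) spreads consecutive per-symbol rates $\beta\alpha^{\ell}$ further apart than the maximum-likelihood targets $1/s_i\in[1/\Omega,1/\omega]$ can ever reward; I would verify that one may therefore assume $1\le\alpha^*\le\Omega/\omega$. The loop of \algexp walks a geometric grid downward from $\Omega/\omega$ with ratio $c$, stopping at the last value $\ge 1$, so it contains a tested $\alpha$ within a factor $c$ of $\alpha^*$ (the degenerate case $\alpha^*<c$, an essentially constant-rate optimum, is absorbed by the last grid point and the re-optimization of $\beta$ performed inside \algexpalpha).

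Next I would transplant the optimal solution onto this $\alpha$, keeping the levels $L^*$ fixed. Set $\lambda_i = \beta^*(\alpha^*)^{\ell^*_i}$ and $\lambda'_i=\beta\alpha^{\ell^*_i}$, choosing $\beta=\beta^*$ when $\alpha\le\alpha^*$ and $\beta=\beta^*(\alpha^*/\alpha)^k$ when $\alpha>\alpha^*$. In either case $\lambda'_i\le\lambda_i$ and $\lambda_i/\lambda'_i$ equals $(\alpha^*/\alpha)^{\ell^*_i}$ or $(\alpha/\alpha^*)^{k-\ell^*_i}$, which is at most $c^k=\sqrt{1+\epsilon}$ since $0\le\ell^*_i\le k$. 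Writing $-\log\pexp{s;\lambda}=-\log\lambda+\lambda s$, the penalties are unchanged and the per-symbol increase in negative log-likelihood is
\[
	\log\frac{\lambda_i}{\lambda'_i} + (\lambda'_i-\lambda_i)s_i \le \log\frac{\lambda_i}{\lambda'_i} \le \tfrac12\log(1+\epsilon),
\]
because $\lambda'_i\le\lambda_i$ makes the data term nonpositive. Summing over $i$, the feasible solution $(L^*,\alpha,\beta)$ shows $\mathit{OPT}(\alpha)-\psi \le \mathit{OPT}(\alpha^*)-\psi + \tfrac{n}{2}\log(1+\epsilon)$.

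The final step, and the conceptual crux, is to convert this additive slack into a multiplicative one, which is precisely why the shift $\psi=n\log g$ appears. Since $\psi=\sum_i\log s_i$, each shifted per-symbol contribution is $-\log(\lambda_i s_i)+\lambda_i s_i\ge 1$ (minimized at $\lambda_i s_i=1$), and the penalties are nonnegative, so $\mathit{OPT}(\alpha^*)-\psi\ge n$. Hence $\tfrac{n}{2}\log(1+\epsilon)\le\tfrac{\epsilon}{2}\,n\le\tfrac{\epsilon}{2}(\mathit{OPT}(\alpha^*)-\psi)$, giving the displayed single-$\alpha$ bound and, composed with Proposition~\ref{prop:expapproxalpha}, the shifted guarantee. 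For the unshifted claim, $g\ge 1$ forces $\psi\ge 0$, so $\mathit{SOL}\le(1+\epsilon)\mathit{OPT}-\epsilon\psi\le(1+\epsilon)\mathit{OPT}$. I expect the main obstacle to be the interaction between the range reduction $\alpha^*\le\Omega/\omega$ and the constraint that a \emph{single} $\beta$ must track the optimal rates across all $k$ levels simultaneously; the grid ratio $\sqrt[2k]{1+\epsilon}$ and the one-sided choice of $\beta$ are what keep every $\lambda'_i$ below $\lambda_i$, while the shift $\psi$ is indispensable for turning the resulting additive error into a multiplicative guarantee.
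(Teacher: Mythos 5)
Your argument has one genuine gap, and it sits exactly at the crux: the range reduction $1 \le \alpha^* \le \Omega/\omega$, which you assert with a one-sentence heuristic and an ``I would verify that'' remark. This claim is precisely the paper's Lemma~\ref{lem:expalphaupper}, and it is the \emph{only} dedicated lemma the paper proves for this proposition; it is not a one-liner. The paper establishes it by an exchange argument: assuming $\alpha > \Omega/\omega$, it first shows one may assume $\beta\alpha \ge \omega$ (replace $\ell_i$ by $\max(\ell_i-1,0)$ and $\beta$ by $\beta\alpha$, using the monotonicity fact that $s\lambda' - \log\lambda' \le s\lambda - \log\lambda$ when $s \ge \lambda' \ge \lambda$ or $s \le \lambda' \le \lambda$, and noting the transition penalties can only decrease), and then runs a case analysis ($\beta\alpha \ge \Omega$ versus $\beta\alpha \le \Omega$) in which $\beta$ is increased and $\alpha$ decreased keeping $\beta\alpha$ fixed, deriving a contradiction with $\alpha > \Omega/\omega$ at optimality. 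Without this lemma your grid argument has nothing to anchor it: if $\alpha^*$ could exceed $\Omega/\omega$, no tested $\alpha$ is within a factor $c$ of it and the transplant step never applies. So the step you deferred is the substantive content of the proposition's proof.

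The rest of your argument is correct and takes a genuinely different, arguably cleaner, route than the paper's. The paper applies its Lemma~\ref{lem:expapprox} with both perturbations at once: it decomposes the score as $\beta F(\alpha) + C(\alpha,\beta)$, bounds $\beta'F(\alpha') \le (1+\epsilon_1)(1+\epsilon_2)^k\,\beta F(\alpha)$, and needs the AM--GM-based Lemma~\ref{lem:geomean} (that $C(\alpha,\beta) \ge \psi$ at a stationary $\beta$) to distribute the multiplicative factor over the shifted score. You instead keep $L^*$, choose $\beta$ one-sidedly so that every rate can only decrease, get an \emph{additive} error of $\tfrac{n}{2}\log(1+\epsilon)$, and convert it to a multiplicative one via the observation that each shifted per-symbol term equals $\lambda_i s_i - \log(\lambda_i s_i) \ge 1$, so the shifted score of \emph{any} feasible solution is at least $n$. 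That lower bound is simpler and stronger than Lemma~\ref{lem:geomean} (it needs no stationarity of $\beta$), and your one-sided transplant works whether the nearest grid point lies above or below $\alpha^*$. One caveat you share with the paper: the constants do not actually compose to $1+\epsilon$. Your factor is $(1+\epsilon/2)^2 = 1+\epsilon+\epsilon^2/4$, and the paper's own claim $(1+\epsilon/2)\sqrt[2k]{1+\epsilon}^{\,k} \le 1+\epsilon$ fails by the same $O(\epsilon^2)$ margin; both are repaired by running the inner calls with, say, $\epsilon/3$, so this is cosmetic rather than structural.
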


\begin{proposition}
\label{prop:exptime}
Let $\Omega = \max s_i$ and let $\omega = \min s_i$.
The computational complexity of
\algexp is $\bigO{\epsilon^{-2}nk^3 \log^2(\Omega / \omega)}$.
\end{proposition}

\subsection{Speeding up $\brstexpprb(\alpha)$}
Our final step is to describe how can we speed-up the computation of $\brstexpprb(\alpha)$ in practice.
The following proposition allows us to ignore a significant amount of tests.

\begin{proposition}
\label{prop:speedup}
Assume a delay sequence $S$, and parameters $\alpha$ and $\gamma$.
Let $\beta$ be a parameter, and let $L$ be the optimal solution for $\brstexpprb(\alpha, \beta)$.
Define
\[
	\beta' = \frac{n}{\sum_i s_i \alpha^{\ell_i}}\quad.
\]
Let $\beta^*$ be the optimal parameter to $\brstexpprb(\alpha)$. Then
either
\[
\beta^* \leq \min(\beta, \beta') \quad\text{or}\quad
\beta^* \geq \max(\beta, \beta')\quad.
\]
\end{proposition}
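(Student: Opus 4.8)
The plan is to treat the score, for a \emph{fixed} level sequence, as a one-dimensional function of $\beta$ and to exploit a cancellation that makes pairwise differences of such functions affine. First I would expand $-\log \pexp{s_i;\beta\alpha^{\ell_i}} = -\log\beta - \ell_i\log\alpha + \beta\alpha^{\ell_i}s_i$ and collect terms, writing, for a fixed $L$,
\[
	f_L(\beta) := \scoreexp{L,S;\alpha,\beta} = -n\log\beta + A_L\beta + C_L, \quad A_L := \sum_{i} \alpha^{\ell_i}s_i,
\]
where $C_L$ gathers everything independent of $\beta$. Since $s_i>0$ we have $A_L>0$, so $f_L$ is strictly convex on $(0,\infty)$ with unique minimizer $n/A_L$. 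Two observations are immediate and will drive the argument: (i) the quantity $\beta'$ in the statement is exactly the minimizer $n/A_L$ of $f_L$ for the sequence $L$ that is optimal at $\beta$; and (ii) because the global optimum $(L^*,\beta^*)$ minimizes the score, fixing $L^*$ and optimizing over $\beta$ forces $\beta^* = n/A_{L^*}$.

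The crucial structural fact I would isolate next is that for any two level sequences the logarithmic term is identical---each contributes exactly $-n\log\beta$---so the difference
\[
	g(\beta) := f_{L^*}(\beta) - f_L(\beta) = (A_{L^*}-A_L)\beta + (C_{L^*}-C_L)
\]
is affine in $\beta$, with slope $A_{L^*}-A_L$. This is the heart of the proof: convexity alone would not suffice, but the exact cancellation of the $-n\log\beta$ terms turns a comparison of two convex curves into a statement about the sign of a single slope.

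I would then argue by contradiction. Suppose $\beta^*$ lies strictly between $\beta$ and $\beta'$, and treat the case $\beta<\beta^*<\beta'$ (the case $\beta'<\beta^*<\beta$ being symmetric). Optimality of $L$ at $\beta$ gives $f_L(\beta)\le f_{L^*}(\beta)$, i.e. $g(\beta)\ge 0$, while optimality of $L^*$ at $\beta^*$ gives $f_{L^*}(\beta^*)\le f_L(\beta^*)$, i.e. $g(\beta^*)\le 0$. Since $g$ is affine and $\beta<\beta^*$, its slope satisfies $A_{L^*}-A_L\le 0$, hence $A_{L^*}\le A_L$ and $\beta^* = n/A_{L^*} \ge n/A_L = \beta'$, contradicting $\beta^*<\beta'$. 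The symmetric case yields $A_{L^*}\ge A_L$ and $\beta^*\le\beta'$, again a contradiction. Therefore $\beta^*$ cannot lie in the open interval $(\min(\beta,\beta'),\max(\beta,\beta'))$, which is the claim.

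I expect the only genuinely delicate point to be recognizing the two identifications (i)--(ii) together with the affine-difference observation; once those are in place the sign-of-slope computation is routine. A minor technical check worth flagging is that $A_L>0$, guaranteed by the standing assumption $s_i>0$ for \brstexpprb, which is what makes $f_L$ strictly convex with a unique minimizer and hence what makes $\beta^* = n/A_{L^*}$ \emph{forced} rather than merely permitted.
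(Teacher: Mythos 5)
Your proof is correct, and it takes a genuinely leaner route than the paper's. Both arguments rest on the same decomposition $\scoreexp{L, S; \alpha, \beta} = A_L \beta - n\log\beta + C_L$ with $A_L = \sum_i \alpha^{\ell_i} s_i$, and on the cancellation of the common $-n\log\beta$ term, which makes the difference of the scores of two fixed level sequences affine in $\beta$. The paper, however, routes this through a global lemma: it defines the best-response map $h(\beta) = n/A_{\lambda(\beta)}$, where $\lambda(\beta)$ is an optimal level sequence at $\beta$ with ties broken lexicographically, proves by a four-case analysis that $h$ is monotone, and then obtains the proposition from monotonicity plus the fixed-point identity $\beta^* = h(\beta^*)$. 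You bypass the lemma entirely: the two optimality inequalities $g(\beta) \geq 0$ and $g(\beta^*) \leq 0$ determine the sign of the affine slope $A_{L^*} - A_L$, which places $\beta^* = n/A_{L^*}$ on the correct side of $\beta' = n/A_L$, contradicting strict containment in the open interval. Your version is shorter, needs no tie-breaking convention (your inequalities hold for arbitrary optimal choices of $L$ and of the pair $(L^*, \beta^*)$), and it even sidesteps a slip in the paper, whose lemma is stated as $h(\beta_1) \geq h(\beta_2)$ for $\beta_1 \leq \beta_2$ while its proof actually derives $h(\beta_1) \leq h(\beta_2)$. What the paper's route buys is a stronger, reusable fact --- monotonicity of the best-response map across the whole parameter range, relating any two tested values of $\beta$ --- whereas your argument, rerun pairwise, yields only the exclusion interval around each tested value; for the stated proposition, though, your two identifications (that $\beta'$ is the unique minimizer of $f_L$ and that $\beta^* = n/A_{L^*}$ is forced, both valid because $s_i > 0$ gives $A_L > 0$) are exactly what is needed, and the slope computation is sound in both symmetric cases.
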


Proposition~\ref{prop:speedup} allows us to ignore some tests: Let $\beta_i$ be the parameters
tested by \algexpalpha, that is, $\beta_i =
\mu^{-1}(1 + \epsilon)^{-i}$. 
Assume that we test $\beta_i$, and compute $\beta'$ as given in Proposition~\ref{prop:speedup}.
If $\beta' > \beta_i$, we can safely ignore testing any $\beta_j$ such that $\beta_i < \beta_j < \beta'$.
Similarly,
if $\beta' < \beta_i$, we can safely ignore testing any $\beta_j$ such that $\beta' < \beta_j < \beta_i$.

The testing order of $\beta_i$ matters since we want to use
both cases $\beta' < \beta_i$ and $\beta' > \beta_i$  efficiently. We propose the following order which worked
well in our experimental evaluation: Let $t$ be the number of different $\beta_i$, and let $m$ be the largest integer
for which $2^m \leq t$. Test the parameters in the order
\[
	0, 2^m, 2^{m - 1}, 2^{3(m - 1)}, \ldots, 1, 3, 5, 7, \ldots,
\]
that is, we start with $0$ and increment by $2^m$ until we reach the end of the list. Then we decrease $m$ by 1, and repeat.
During the traverse, we ignore the parameters that were already tested, as well as the redundant parameters. 

Interestingly enough, this approach cannot be applied directly to the discrete
version of the problem.  First of all, the technique for proving
Proposition~\ref{prop:speedup} cannot be applied directly to the score function
for the geometric distribution. Secondly, there is no closed formula for
computing the discrete analogue of $\beta'$ given in
Proposition~\ref{prop:speedup}. 

\section{Related work}\label{sec:related}

\textbf{Discovering bursts}
Modelling and discovering bursts is a very well-studied topic in data mining.
We will highlight some existing techniques.
We are modelling delays between events, but we can alternatively model
event counts in some predetermined window: high count indicate burst.
\citet{ihler:06:adaptive} proposed modelling such a statistic with Poisson
process, while \citet{fung:05:burst} used Binomial distribution.
If the events at hand are documents, we can model burstiness
with time-sensitive topic models~\citep{wang:06:tot,leskovec:06:topic,kawamae:11:trend}.
As an alternative methods to discover bursts, \citet{zhu:03:wavelet} used
wavelet analysis, \citet{vlachos:04:identify} applied Fourier analysis, and
\citet{he:10:topic} adopted concepts from Mechanics.
\citet{lappas:09:burst} propose discovering maximal bursts with large discrepancy.

\textbf{Segmentation}
A sister problem of burstiness is a classic segmentation problem. Here instead
of penalizing transitions, we limit the number of segments to
$k$. If the overall score is additive w.r.t. the segments, then this problem
can be solved in $\bigO{n^2k}$ time~\citep{bellman:61:on}. For certain cases, this problem 
has a linear time solution~\citep{galil:90:linear}. Moreover, under some mild assumptions we
can obtain a $(1 + \epsilon)$ approximation in linear time~\citep{guha:06:estimate}.

\textbf{Concept drift detection in data streams:} A related problem setting to
burstiness is concept drift detection. Here, a typical goal is to have an
online algorithm that can perform update quickly and preferably does not use
significant amount of memory. For an overview of existing techniques see an
excellent survey by~\citet{gama:14:survey}.  The algorithms introduced in this
paper along with the original approach are not strictly online because in every
case we need to know the mean of the sequence. However, if the mean is known,
then we can run \algdp in online fashion, and,
if we are only interested in the burstiness of a current symbol,
we need to maintain only $\bigO{k}$ elements, per $\beta$.

\section{Experimental evaluation}\label{sec:exp}

In this section we present our experiments.
As a baseline we use method by~\citet{kleinberg:03:burst}, that is, we
derive the parameter $\beta$ from $\mu$, the mean of the sequence.
For exponential model, $\beta = \mu^{-1}$; we refer to this model as
\algmeanexp. For geometrical model, $\beta = \mu / (\mu + 1)$; we refer to this
approach as \algmeangeo. Throughout the experiments, we used $\epsilon = 0.05$
and $\gamma = 1$ for our algorithms. 

\textbf{Experiments with synthetic data:}
We first focus on demonstrating when optimizing $\beta$ is more
advantageous than the baseline approach.

For our first experiment we generated a sequence of $500$ data points.
We planted a single burst with a varying length $50$--$250$. The burst was
generated with $\pexp{\cdot; 1}$, while the remaining sequence was generated
with $\pexp{\cdot; 2}$.  We computed bursts with \algmeanexp and \algexpalpha,
the parameters were set to $k = 1$, $\alpha = 2$. The obtained level sequence
was evaluated by computing the hamming distance, $\sum_i \abs{\ell_i - \ell^*_i}$, where $\ell^*_i$ 
is the ground truth level sequence.
We repeated each experiment $100$ times.

We see from the results given in Figure~\ref{fig:burstlen} that the bursts
discovered by \algexpalpha are closer to the ground truth, on average, than the
baseline.  This is especially the case when burst becomes larger. The main
reason for this is that short bursts do not affect significantly the average of
the sequence, $\mu$, so consequently, $\mu$ is close to the base activity
level.  As the burst increases, so does $\mu$, which leads to 
underestimating of $\beta$.

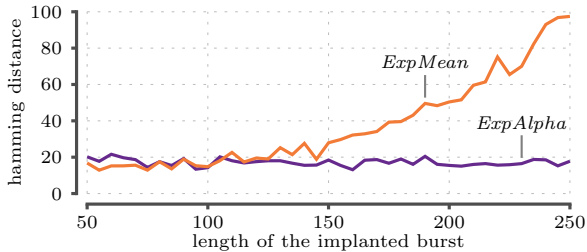
\begin{figure}[ht!]
\begin{center}
\begin{tikzpicture}[baseline]
\begin{axis}[xlabel={length of the implanted burst}, ylabel= {hamming distance},
    height = 4cm,
    width = 8cm,
    cycle list name=yaf,
	ymax = 100,
	ymin = 0,
	xtick = {50, 100, ..., 250}
    ]

\addplot[yafcolor1] table[x index = 0, y index = 1, header = false]  {burstlength2.dat};
\addplot[yafcolor2] table[x index = 0, y index = 2, header = false]  {burstlength2.dat};

\node[inner sep = 2pt, pin = {[pin distance = 3mm, pin edge = {thick}, text = black, font = \scriptsize, inner sep = 1pt]90:\algexpalpha}] at (axis cs: 230, 16.46) {};
\node[inner sep = 2pt, pin = {[pin distance = 3mm, pin edge = {thick}, text = black, font = \scriptsize, inner sep = 1pt]90:\algmeanexp}] at (axis cs: 190, 49.58) {};

\pgfplotsextra{\yafdrawaxis{50}{250}{0}{100}}
\end{axis}
\end{tikzpicture}
\caption{Hamming distance between the ground truth and the discovered level sequence as a function of the length of the planted burst. 
Low values are better.}
\label{fig:burstlen}
\end{center}
\end{figure}

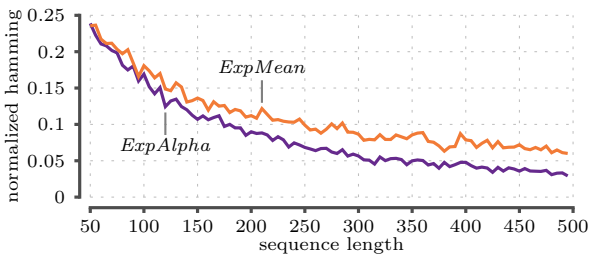
\begin{figure}[ht!]
\begin{center}
\begin{tikzpicture}[baseline]
\begin{axis}[xlabel={sequence length}, ylabel= {normalized hamming},
    height = 4cm,
    width = 8cm,
    cycle list name=yaf,
	xmax = 500,
	ymin = 0,
	ymax = 0.25,
	yticklabel style={/pgf/number format/fixed,/pgf/number format/precision=2},
	ytick = {0, 0.05, ..., 0.30},
	xtick = {50, 100, ..., 500}
    ]

\addplot[yafcolor1] table[x index = 0, y expr = {\thisrowno{2} / \thisrowno{0}}, header = false]  {sequence_length.dat};
\addplot[yafcolor2] table[x index = 0, y expr = {\thisrowno{3} / \thisrowno{0}}, header = false]  {sequence_length.dat};

\node[inner sep = 2pt, pin = {[pin distance = 3mm, pin edge = {thick}, text = black, font = \scriptsize, inner sep = 1pt]270:\algexpalpha}] at (axis cs: 120, 14.93 / 120) {};
\node[inner sep = 2pt, pin = {[pin distance = 3mm, pin edge = {thick}, text = black, font = \scriptsize, inner sep = 1pt]90:\algmeanexp}] at (axis cs: 210, 25.56 / 210 ) {};

\pgfplotsextra{\yafdrawaxis{50}{500}{0}{0.25}}
\end{axis}
\end{tikzpicture}
\end{center}
\caption{Hamming distance, normalized by the sequence length, between the
ground truth level sequence and the discovered level sequence as a function of
the sequence length. Low values are better.}
\label{fig:burstlen2}
\end{figure}

Our next experiment is similar, expect now we vary the sequence length, $n$,
(50--500) and set the burst length to be $n/3$. We generated the sequence as
before, and we use the same parameters.  In
Figure~\ref{fig:burstlen2} we report,
$\frac{1}{n}\sum_i \abs{\ell_i - \ell^*_i}$,
the number of disagreements compared with the ground truth, normalized by
$n$.
Each experiment was repeated 300 times.

We see that for the shortest sequences, the number of disagreement is same for
both algorithm, around $0.2$--$0.25$. This is due that we do not have enough
samples to override the transition penalty $\pen{}$. Once the sequence becomes
longer, we have more evidence of a burst, and here \algexpalpha starts to beat the
baseline, due to a better model fit.

\textbf{Experiments with real-world data:}
We considered two datasets: The first dataset, \dtname{Crimes}, consists of
17\,033 crimes related to narcotics in Chicago between January and October, 2015.
The second dataset, \dtname{Mine}, consists of 909 fatalities in U.S. mining
industry dating from 2000, January.\!\footnote{Both datasets are available at
\url{http://data.gov/}.}
This data is visualized in Figure~\ref{fig:minedata}.
In both datasets, each event has a time stamp: in \dtname{Crimes} we use minutes as granularity, whereas in \dtname{Mine} the time stamp is by the date.
Using these time stamps, we created a delay sequence.

We applied \algexp, \algexpalpha, and \algmeanexp to \dtname{Crimes}. We set $k = 4$,
and for \algexpalpha and \algmeanexp we used $\alpha = 2$.
Since \dtname{Crimes} contains events with 0 delay, we added 1 minute to each delay to avoid the pathological case described in Section~\ref{sec:problem}.
The obtained bursts are presented in Figure~\ref{fig:crimes}.
We also applied \alggeo, \alggeoalpha, and \algmeangeo to \dtname{Mine}. Here we set $\alpha =
1/2$ and $k = 4$, however the algorithm used only 3 levels. The obtained bursts are presented in Figure~\ref{fig:minedelays}.

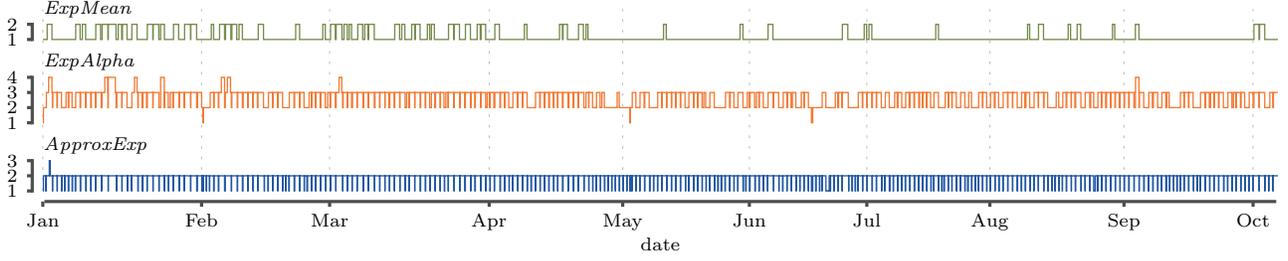
\begin{figure*}
\begin{tikzpicture}[baseline]
\begin{axis}[xlabel={date}, 
    height = 4cm,
    width = 18cm,
    cycle list name=yaf,
	xtick = { -342, -2122, -3972, -5669, -7289, -9036, -10877, -13078, -14844, -17033},
	xticklabels = {Oct, Sep, Aug, Jul, Jun, May, Apr, Mar, Feb, Jan},
	scaled x ticks = false,
	ytick = {0, 1, 2, 4.5, 5.5, 6.5, 7.5, 10, 11},
	xmin = {-17033},
	ymax = 12,
	yticklabels = {1, 2, 3, 1, 2, 3, 4, 1, 2},
	ymajorgrids = false,
	every axis plot post/.append style= {line width=0.5pt} ,
    ]

\addplot[yafcolor5, const plot]
	table[x expr = {-\thisrowno{0}}, y expr = {\thisrowno{1}}, header = false]  {chicago_narcotics_approx_compressed.out};

\addplot[yafcolor2, const plot]
	table[x expr = {-\thisrowno{0}}, y expr = {\thisrowno{1} + 4.5}, header = false]  {chicago_narcotics_alpha_compressed.out};

\addplot[yafcolor3, const plot]
	table[x expr = {-\thisrowno{0}}, y expr = {\thisrowno{1} + 10}, header = false]  {chicago_narcotics_mean_compressed.out};

\node[anchor = west, font = \scriptsize, inner sep = 0pt] at (axis cs: -17033, 3) {\algexp};
\node[anchor = west, font = \scriptsize, inner sep = 0pt] at (axis cs: -17033, 8.5) {\algexpalpha};
\node[anchor = west, font = \scriptsize, inner sep = 0pt] at (axis cs: -17033, 12) {\algmeanexp};

\pgfplotsextra{\yafdrawxaxis{0}{-17033}\yafdrawyaxis{0}{2}\yafdrawyaxis{4.5}{7.5}\yafdrawyaxis{10}{11}}
\end{axis}
\end{tikzpicture}
\caption{Discovered bursts in \dtname{Crimes} dataset.}
\label{fig:crimes}
\end{figure*}

\begin{figure}[ht!]
\begin{tikzpicture}[baseline]
\begin{axis}[xlabel={year (from 2000)}, ylabel= {days w/o fatality},
    height = 3.2cm,
    width = 9cm,
    cycle list name=yaf,
    clip mode = individual,
	xticklabels = {00, 01, 02, 03, 04, 05, 06, 07, 08, 09, 10, 11, 12, 13, 14, 15},
	xtick = {0, 96, 174, 247, 307, 369, 429, 503, 570, 623, 658, 730, 766, 802, 843, 886}
    ]

\addplot[yafcolor5, only marks, mark = *, mark size = 0.3]
table[x expr = {\coordindex}, y index = 0, header = false]  {fatalities_delays.dat};

\pgfplotsextra{\yafdrawaxis{0}{907}{0}{66}}
\end{axis}
\end{tikzpicture}\\
\caption{The delay sequence \dtname{Mine}, as well as the discovered bursts.} 
\label{fig:minedata}
\end{figure}

\begin{figure}[ht!]
\begin{tikzpicture}[baseline]
\begin{axis}[xlabel={year (from 2000)}, 
    height = 3.7cm,
    width = 9.3cm,
    cycle list name=yaf,
    clip mode = individual,
	xticklabels = {00, 01, 02, 03, 04, 05, 06, 07, 08, 09, 10, 11, 12, 13, 14, 15},
	xtick = {0, 96, 174, 247, 307, 369, 429, 503, 570, 623, 658, 730, 766, 802, 843, 886},
	ytick = {0, 1, 3.5, 4.5, 5.5, 8, 9, 10},
	ymax = 11,
	yticklabels = {1, 2, 1, 2, 3, 1, 2, 3},
	ymajorgrids = false,
    ]

\addplot[yafcolor5, const plot]
table[x index = 0, y expr = {\thisrowno{1}}, header = false]  {fatalities_delays_approx_compressed.out};

\addplot[yafcolor2, const plot]
table[x index = 0, y expr = {\thisrowno{1} + 3.5}, header = false]  {fatalities_delays_mean_compressed.out};

\addplot[yafcolor3, const plot]
table[x index = 0, y expr = {\thisrowno{1} + 8}, header = false]  {fatalities_delays_mean_compressed.out};

\node[anchor = west, font = \scriptsize, inner sep = 0pt] at (axis cs: 0, 2) {\algexp};
\node[anchor = west, font = \scriptsize, inner sep = 0pt] at (axis cs: 0, 5.5) {\algexpalpha};
\node[anchor = west, font = \scriptsize, inner sep = 0pt] at (axis cs: 0, 10) {\algmeanexp};

\pgfplotsextra{\yafdrawxaxis{0}{907}\yafdrawyaxis{0}{1}\yafdrawyaxis{3.5}{5.5}\yafdrawyaxis{8}{10}}
\end{axis}
\end{tikzpicture}
\caption{The delay sequence \dtname{Mine}, as well as the discovered bursts.} 
\label{fig:minedelays}
\end{figure}

In \dtname{Mine}, the results by \alggeoalpha and \algmeangeo are the same.
However, we noticed that the results differ if we use different $\alpha$.  The
biggest difference between \alggeo and \alggeoalpha is the last burst:
\alggeoalpha (and \algmeangeo) set the last burst to be on level 2, while
\alggeo uses level 1. The reason for this is that \alggeo selects $\alpha$ to
be very close to $0$, that is, much smaller than $1/2$, the parameter used by
the other algoritms. This implies that when going one level up, the model
expects the events to be much closer to each other. 

In \dtname{Crimes}, \algexp and \algexpalpha discover burstier structure than
\algmeanexp. \algexpalpha uses 4 different levels.  Interestingly enough, in
this level sequence, we spent most of the time at level 1, and we descended to
level 0 for 3 short bursts. In other words, in addition to finding crime
streaks, \algexpalpha also found three short periods when narcotics related
crime rate was lower than usual. \algexp also spends most of its time on level
1 but often descends on level 0, while also highlighting one burst in early
January.

\textbf{Number of \algdp calls:}
Next, we study relative efficiency when compared \algdp. Since all 4
approximation schemes use \algdp as a subroutine, a natural way of measuring
the efficiency is to study the number of \algdp calls. We report the number of
calls as a function of $\epsilon$ for datasets \dtname{Mine} and
\dtname{Crimes} in Figure~\ref{fig:calls}. Here, we did not use the speed-up version
of \algexpalpha.

\begin{figure}
\begin{tikzpicture}
\begin{axis}[xlabel={parameter $\epsilon$}, ylabel= {tests},
	title = {\scriptsize\dtname{Mine}},
    height = 3.5cm,
    width = 8.7cm,
    cycle list name=yaf,
	tick scale binop = \times,
	xticklabel = {\pgfmathparse{\tick}$2^{-\pgfmathprintnumber{\pgfmathresult}}$},
	yticklabel = {\pgfmathparse{\tick}$2^{\pgfmathprintnumber{\pgfmathresult}}$},
	xtick = {0, 1, 2, 3, 4, 5, 6, 7, 8, 9},
	ymax = 25,
	legend style = {at = {(0, 1)}, anchor = {north west}}
    ]
\addplot[yafcolor1] table[x expr = {\coordindex}, y index = 1, header = false]  {geo_tests.dat}
node[sloped, pos = 0.5, text = black, font = \scriptsize, above, inner sep = 2pt] {\alggeoalpha};

\addplot[yafcolor2] table[x expr = {\coordindex}, y index = 3, header = false]  {geo_tests.dat}
node[sloped, pos = 0.5, text = black, font = \scriptsize, above, inner sep = 2pt] {\alggeo};

\pgfplotsextra{\yafdrawaxis{0}{9}{3.3}{25}}
\end{axis}
\end{tikzpicture}

\begin{tikzpicture}
\begin{axis}[xlabel={parameter $\epsilon$}, ylabel= {tests},
	title = {\scriptsize\dtname{Crimes}},
    height = 3.5cm,
    width = 8.7cm,
    cycle list name=yaf,
	tick scale binop = \times,
	xticklabel = {\pgfmathparse{\tick}$2^{-\pgfmathprintnumber{\pgfmathresult}}$},
	yticklabel = {\pgfmathparse{\tick}$2^{\pgfmathprintnumber{\pgfmathresult}}$},
	xtick = {0, 1, 2, 3, 4, 5, 6, 7, 8, 9},
	ytick = {5, 10, 15, 20, 25},
	ymax = 28,
	legend style = {at = {(0, 1)}, anchor = {north west}}
    ]
\addplot[yafcolor1] table[x expr = {\coordindex}, y index = 1, header = false]  {exp_tests.dat}
node[sloped, pos = 0.5, text = black, font = \scriptsize, above, inner sep = 2pt] {\algexpalpha};
\addplot[yafcolor2] table[x expr = {\coordindex}, y index = 3, header = false]  {exp_tests.dat}
node[sloped, pos = 0.5, text = black, font = \scriptsize, above, inner sep = 2pt] {\algexp};

\pgfplotsextra{\yafdrawaxis{0}{9}{2.3}{28}}
\end{axis}
\end{tikzpicture}

\caption{Number of \algdp calls as a function of $\epsilon$. Both $x$ and
$y$-axis are logarithmic. We set $k = 4$, $\alpha = 0.5$ for \algexpalpha, and
$\alpha = 2$ for \algexpalpha. Here, we did not use the speed-up version of \algexpalpha.}
\label{fig:calls}

\end{figure}
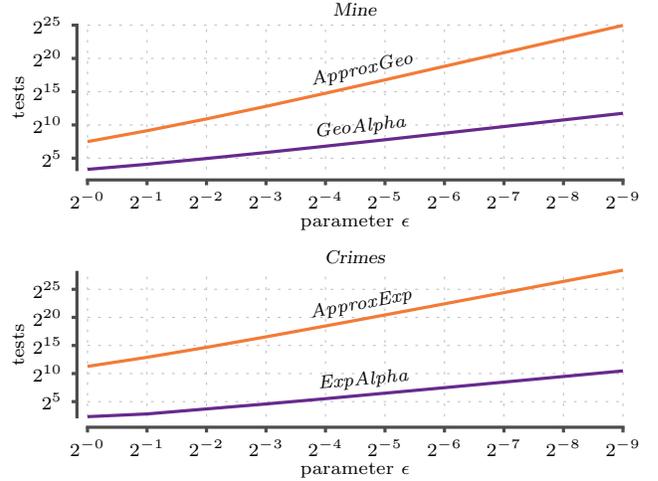

We see that the behaviour depends heavily on the accuracy parameter $\epsilon$:
for example, if we use $\epsilon = 0.5$,  then \alggeoalpha uses
17 calls while \alggeo uses 561 calls; if we set $\epsilon = 2^{-9}$, then
\alggeoalpha needs 3453 calls while \alggeo needs 32\,810\,406 calls.
This implies that we should not use extremely small $\epsilon$, especially
if we also wish to optimize $\alpha$. Nevertheless, the algorithms are
fast when we use moderately small $\epsilon$.

\textbf{Effect of a speed-up:}
Finally, we compare the effect of a speed-up for \algexpalpha
described in Section~\ref{sec:expburst}. Here we used both datasets
\dtname{Mine} and \dtname{Crimes} to which we apply \algexpalpha with $k = 5$ and $\alpha = 2$.
We vary $\epsilon$ from $2^{-13}$ to $1/2$ and compare the plain version
vs. speed-up in Figure~\ref{fig:speedup}.

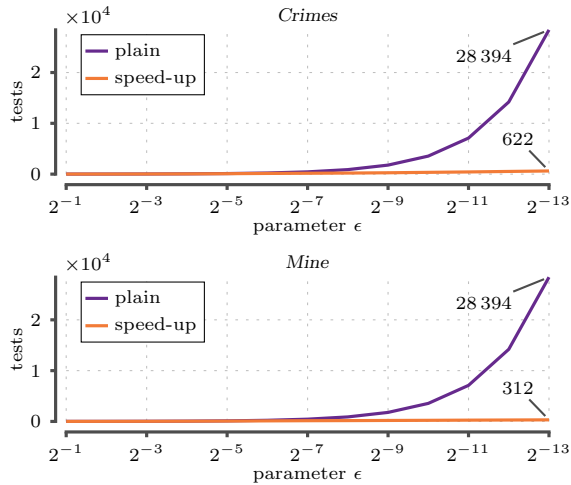
\begin{figure}[ht!]
\begin{center}
\begin{tikzpicture}
\begin{axis}[xlabel={parameter $\epsilon$}, ylabel= {tests},
	title = {\scriptsize\dtname{Crimes}},
    height = 3.5cm,
    width = 8cm,
    cycle list name=yaf,
	tick scale binop = \times,
	xticklabel = {\pgfmathparse{\tick}$2^{-\pgfmathprintnumber{\pgfmathresult}}$},
	xtick = {1, 3, 5, 7, 9, 11, 13},
	legend pos = {north west}
    ]
\addplot[yafcolor1] table[x expr = {\coordindex + 1}, y index = 1, header = false]  {speedup_narc.dat};
\addplot[yafcolor2] table[x expr = {\coordindex + 1}, y index = 2, header = false]  {speedup_narc.dat};
\legend {plain, speed-up}

\node[coordinate,pin={[inner sep = 2pt, pin edge = {thick, yafaxiscolor, shorten <=2pt}]205:{\scriptsize $28\,394$}}] at (axis cs:13, 28394) {};
\node[coordinate,pin={[inner sep = 2pt, pin distance = 3mm, pin edge = {thick, yafaxiscolor, shorten <=2pt}]115:{\scriptsize $622$}}] at (axis cs:13, 622) {};

\pgfplotsextra{\yafdrawaxis{1}{13}{9}{28394}}
\end{axis}
\end{tikzpicture}
\begin{tikzpicture}
\begin{axis}[xlabel={parameter $\epsilon$}, ylabel= {tests},
	title = {\scriptsize\dtname{Mine}},
    height = 3.5cm,
    width = 8cm,
    cycle list name=yaf,
	tick scale binop = \times,
	xticklabel = {\pgfmathparse{\tick}$2^{-\pgfmathprintnumber{\pgfmathresult}}$},
	xtick = {1, 3, 5, 7, 9, 11, 13},
	legend pos = {north west}
    ]
\addplot[yafcolor1] table[x expr = {\coordindex + 1}, y index = 1, header = false]  {speedup_mine.dat};
\addplot[yafcolor2] table[x expr = {\coordindex + 1}, y index = 2, header = false]  {speedup_mine.dat};
\legend {plain, speed-up}

\node[coordinate,pin={[inner sep = 2pt, pin edge = {thick, yafaxiscolor, shorten <=2pt}]205:{\scriptsize $28\,394$}}] at (axis cs:13, 28394) {};
\node[coordinate,pin={[inner sep = 2pt, pin distance = 3mm, pin edge = {thick, yafaxiscolor, shorten <=2pt}]115:{\scriptsize $312$}}] at (axis cs:13, 312) {};

\pgfplotsextra{\yafdrawaxis{1}{13}{9}{28394}}
\end{axis}
\end{tikzpicture}
\end{center}
\caption{Number of tests needed as a function of $\epsilon$. Speed-up (see, Section~\ref{sec:expburst}) vs. vanilla version.}
\label{fig:speedup}
\end{figure}

We see in Figure~\ref{fig:speedup} that the we gain significant speed-up as we
decrease $\epsilon$: At best, we improve by two orders of magnitude.

\section{Concluding remarks}\label{sec:conclusions}
In this paper we presented variants of~\cite{kleinberg:03:burst} for
discovering bursts: instead of deriving the base rate from $\mu$, the average
delay time between the events, we optimize this parameter along with the actual
burst discovery.
We showed that this leads to better burst discovery,
especially if the bursts are long. 
We also propose variants, where we optimize the change parameter $\alpha$,
instead of having it as a parameter.

Despite being a minor tweak, the resulting optimization problems are significantly
harder. To solve the problems, we introduce efficient algorithms yielding $(1 +
\epsilon)$ approximation guarantee. These methods are based on testing multiple
values for the base rate, and selecting the burst sequence with the best score.
Despite being similar problems, discrete and continuous versions of the problem
required their own algorithms. In addition, we were able significantly speed-up
the exponential model variant by safely ignoring some candidate values for the base rate.

The approximation algorithms are quasi-linear with respect to sequence length.
However, especially when we optimize $\alpha$, the algorithms depend also on
the actual values of the sequence, see Table~\ref{tab:algos}. A potential future work is to
improve the algorithms, and develop polynomially strong approximation schemes.
The other fruitful direction is to develop heuristics that allow us to ignore
large parts of the parameters, similar to the speed-up we propose for the
exponential model variant of the problem.

\bibliography{bibliography}

\clearpage
\appendix
\section{Viterbi algorithm for solving Problem~\ref{prb:burstexporig} or Problem~\ref{prb:burstgeoorig}}
\label{sec:app_viterbi}

We can solve Problem~\ref{prb:burstexporig} or Problem~\ref{prb:burstgeoorig}
using the standard dynamic programming algorithm by~\citet{viterbi:67:dp}.
Off-the-shelf version of this algorithm requires $\bigO{nk^2}$ time. However, we
can easily speed-up the algorithm to $\bigO{nk}$.

To see this, let us first write $o[i, j]$ to express the optimal score for the
$i$th first symbols such that the last level $\ell_i = j$.  The Viterbi algorithm
uses the fact that
\[
	o[i, j] = -\log p(s_j, \beta \alpha^j) + \min_{j'} o[i - 1, j'] + \pen{j', j}
\]
to solve the optimal sequence. Define two arrays
\[
\begin{split} 
	a[j]  & = \min_{x \geq j} o[i - 1, x] + \pen{x, j} = \min_{j' \geq j} o[i - 1, x]\quad\text{and} \\
	b[j]  & = \min_{x \leq j} o[i - 1, x] + \pen{x, j}\quad. 
\end{split}
\]
By definition, we have 
\[
	o[i, j] = -\log p(s_j, \beta \alpha^j) + \min_{j' \in \set{a[j], b[j]}} o[i - 1, j'] + \pen{j', j},
\]
that is, we can compute $o[i, j]$ in constant time as long as we have $a[j]$ and $b[j]$.
To compute $a[j]$ fast, note that either $a[j] = j$ or $a[j] = a[j + 1]$, whichever produces
better score. Similarly, due to linearity of $\pen{}$, we have $b[j] = j$ or $b[j] = b[j - 1]$,
whichever produces better score. This leads to a simple dynamic program given in Algorithm~\ref{alg:dp}
that performs in $\bigO{nk}$ time.

\begin{algorithm}[ht]
	\ForEach{$i = 1, \ldots, n$} {
		compute $a$ and $b$ in $\bigO{k}$ time\;
		\ForEach{$j = 0, \ldots, k$} {
			$c_1 \define o[i - 1, a[j]]$\;
			$c_2 \define o[i - 1, b[j]] + \pen{b[j], j}$\;
			$o[i, j] \define \min (c_1, c_2) - \log p(s_i; \beta \alpha^j)$\;
		}
		
	}
\caption{$\algdp(S, \alpha, \beta, \gamma, k, p)$, a dynamic program in order to discover burstiness}
\label{alg:dp}
\end{algorithm}

\section{Proof of Proposition~\ref{prop:boundprb}}
\label{sec:app_boundprb}

\begin{proof}
Let $L$ and $\beta$ be the solution to $\brstexpprb(\alpha)$.
Since 
\[
- \log \pexp{s ; \lambda} = s \lambda - \log \lambda,
\]
we can decompose the score $\scoreexp{L, S; \alpha, \beta, \gamma}$ as 
\[
	 \sum_{i = 1}^n \beta \alpha^{\ell_i} s_i - \log \beta - \ell_i \log \alpha + \pen{\ell_{i - 1}, \ell_i} \quad.
\]
Define $d = \sum_{i} \max\pr{\ell_i - \ell_{i - 1}, 0}$ and $m = \sum_i \ell_i$.
Let us also write $f(L) =  \sum_{i} \alpha^{\ell_i} s_i$.
Then the score becomes
\begin{equation}
	 \beta f(L) - n\log \beta - m \log \alpha + d \gamma \log n\quad.
\label{eq:score}
\end{equation}
Obviously, $L$ satisfies the constraints posed in \boundprb.
Moreover, $L$ minimizes $f(L)$ (within the constraints); otherwise we could
replace $L$ with $L'$, making the first term in Eq.~\ref{eq:score} genuinely smaller and keeping the remaining
terms constant. This contradicts the optimality of $L$. Consequently, $L$ solves
\boundprb.

To prove the remaining claims, first note that $\beta$ optimizing Eq.~\ref{eq:score}
must satisfy 
\[
	\frac{\partial \scoreexp{}}{\partial \beta} =  f(L) - n / \beta = 0,
\]
proving the claim regarding $\beta$.

Since $\ell_i \leq k$, we have $m \leq nk$.
To bound $d$, let us write $e = \sum_i \max\pr{\ell_{i - 1} - \ell_{i}, 0}$.
We have
\[
	d - e = \sum_{i = 1}^n \ell_i - \ell_{i - 1} = \ell_n \leq k 
\]
and
\[
	d + e = \sum_{i = 1}^n \abs{\ell_i - \ell_{i - 1}} \leq nk\quad.
\]
Summing the inequalities leads to $2d \leq (n + 1)k$,
which proves the proposition.
\end{proof}

\section{Proof of Proposition~\ref{prop:geoapproxalpha}}
\label{sec:app_geoapproxalpha}

To prove the proposition we need several lemmas.
Throughout this section, we assume that we are given an integer delay sequence $S$,
and parameters $\alpha$, $\gamma$, $k$, and $\epsilon > 0$. We will write $\mu = \frac{1}{n}\sum_{i} s_i$ and $c = 1 + \epsilon$.

The first lemma states that the optimal $\beta$ will be between the range that \alggeoalpha tests.

\begin{lemma}
\label{lem:geobounds}
Let $L$ and $\beta$ be the solution of $\brstgeoprb(\alpha)$.
Then
	$\frac{\mu}{1 + \mu} \leq \beta \leq \frac{\mu}{1/n + \mu}$.
\end{lemma}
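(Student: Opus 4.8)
The plan is to fix the optimal level sequence $L$ and view the score $\scoregeo{L, S; \alpha, \beta}$ as a function of $\beta$ alone. Since the penalty terms $\pen{\ell_{i-1}, \ell_i}$ do not depend on $\beta$, the optimal $\beta$ minimizes only the negative log-likelihood
\[
	h(\beta) = \sum_{i=1}^n \left[-\log\fpr{1 - \beta\alpha^{\ell_i}} - s_i \log\fpr{\beta\alpha^{\ell_i}}\right].
\]
First I would observe that (assuming $\mu > 0$, as the $\mu = 0$ case is handled separately by the algorithm) $h$ is strictly convex on its domain, and that $h(\beta) \to \infty$ both as $\beta \to 0^+$ and as $\beta$ approaches the upper end of its domain, so the minimizer is an interior stationary point. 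Writing $\lambda_i = \beta\alpha^{\ell_i}$ and setting $h'(\beta) = 0$ yields, after multiplying through by $\beta$, the moment condition
\[
	\sum_{i=1}^n \frac{\lambda_i}{1 - \lambda_i} = \sum_{i=1}^n s_i = n\mu .
\]

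For the lower bound, I would use that in the geometric setting $\alpha < 1$ and $\ell_i \geq 0$, so $\lambda_i = \beta\alpha^{\ell_i} \leq \beta$. Since $t \mapsto t/(1-t)$ is increasing on $(0,1)$, every summand is at most $\beta/(1-\beta)$, hence $n\mu \leq n\beta/(1-\beta)$. Solving for $\beta$ gives $\beta \geq \mu/(1+\mu)$.

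The upper bound is where the main work lies: I need to exhibit an index $j$ with $\ell_j = 0$, so that $\lambda_j = \beta$ and positivity of the remaining summands gives $\beta/(1-\beta) \leq \sum_i \lambda_i/(1-\lambda_i) = n\mu$, which rearranges to $\beta \leq n\mu/(1 + n\mu) = \mu/(1/n + \mu)$. To prove that the minimum level is zero at the optimum, I would argue by contradiction: if $m^* = \min_{1 \le i \le n} \ell_i \geq 1$, replace $L$ by $\ell_i' = \ell_i - m^*$ and $\beta$ by $\beta' = \beta\alpha^{m^*}$. This leaves every $\lambda_i$---and thus the likelihood---unchanged, keeps all transition penalties for $i \geq 2$ the same (differences of consecutive levels are preserved), while the first penalty $\pen{\ell_0, \ell_1}$ drops from $\ell_1\gamma\log n$ to $(\ell_1 - m^*)\gamma\log n$ because $\ell_0 = 0$ is held fixed. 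The total score strictly decreases, contradicting optimality; hence $m^* = 0$, which closes the argument. The only delicate points to verify carefully are the boundary behaviour of $h$ (to justify the interior stationary point) and the feasibility of the shifted solution $(L', \beta')$ in the contradiction step.
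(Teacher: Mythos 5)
Your proposal is correct and takes essentially the same route as the paper's proof: both fix the optimal $L$, set the derivative of the score with respect to $\beta$ to zero, and bound the resulting sum termwise, using $\alpha^{\ell_i}\le 1$ for the lower bound and the existence of a zero level for the upper bound. The differences are only in rigor, and they favor you: you justify the interior stationary point (convexity plus boundary blow-up) and prove the claim that some $\ell_i=0$ via the level-shifting argument, whereas the paper simply asserts it; moreover your moment condition $\sum_i \beta\alpha^{\ell_i}/(1-\beta\alpha^{\ell_i}) = n\mu$ is the correct derivative, while the paper's summand $\beta/(1-\beta\alpha^{\ell_i})$ drops a factor of $\alpha^{\ell_i}$ — an error that happens not to affect the final bounds, which follow identically from either form.
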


\begin{proof}
Since $\beta$ is optimal we must have
$\partial\scoregeo{S, L; \beta} / \partial \beta = 0$.
This implies that
\[
	\sum_{i = 1}^n \frac{s_i}{\beta}  = \sum_{i = 1}^n  \frac{1}{1 - \beta \alpha^{\ell_i}}
\quad\text{or}\quad
	\mu  = \frac{1}{n} \sum_{i = 1}^n  \frac{\beta}{1 - \beta \alpha^{\ell_i}}\quad.
\]
Since $\alpha^{\ell_i} \leq 1$ we must have
\[
	\mu =  \frac{1}{n} \sum_{i = 1}^n  \frac{\beta}{1 - \beta \alpha^{\ell_i}} \leq  \frac{\beta}{1 - \beta},
\]
which can be rewritten as $\mu / (1 + \mu) \leq \beta$. This gives us the lower bound of
the lemma.

To prove the other bound, note that we must have at least one $\ell_i = 0$.
This leads to
\[
	\mu =  \frac{1}{n} \sum_{i = 1}^n  \frac{\beta}{1 - \beta \alpha^{\ell_i}} \geq  \frac{1}{n}\frac{\beta}{1 - \beta},
\]
which can be rewritten as
$\mu / (n^{-1} + \mu) \geq \beta$. This proves the upper bound of the lemma.
\end{proof}

Next we show that if we vary $\alpha$ and $\beta$ by little, while keeping $L$
constant, the score will not change a lot.

\begin{lemma}
\label{lem:geoapprox}
Let $\alpha'$ such that $\alpha^{1 + \epsilon} \leq \alpha' \leq \alpha$ and
let $\beta'$ such that $\beta^{1 + \epsilon} \leq \beta' \leq \beta$.
Then
\[
     \scoregeo{L; \alpha', \beta'} \leq  (1 + \epsilon)\scoregeo{L; \alpha, \beta}\quad.
\]
\end{lemma}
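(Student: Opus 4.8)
The plan is to prove the inequality termwise, exploiting the fact that every summand of $\scoregeo{}$ is nonnegative. First I would expand the per-symbol negative log-likelihood of the geometric model,
\[
-\log\pgeo{s_i; \beta\alpha^{\ell_i}} = -\log\fpr{1 - \beta\alpha^{\ell_i}} - s_i\log\beta - s_i\ell_i\log\alpha,
\]
so that $\scoregeo{L;\alpha,\beta}$ becomes a sum over $i$ of four pieces: $A_i = -\log(1-\beta\alpha^{\ell_i})$, $B_i = -s_i\log\beta$, $C_i = -s_i\ell_i\log\alpha$, and the penalty $\pen{\ell_{i-1},\ell_i}$. The crucial preliminary observation is that, since $\alpha,\beta\in(0,1)$ and $s_i,\ell_i\ge 0$, each of these pieces is nonnegative; in particular $\beta\alpha^{\ell_i}\in(0,1)$ forces $A_i>0$, and $\log\alpha,\log\beta<0$ make $B_i,C_i\ge 0$. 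This nonnegativity is exactly what lets the multiplicative $(1+\epsilon)$ bound on the whole sum follow from a multiplicative bound on each piece, and it is the feature that distinguishes the geometric case from the exponential one.

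Next I would handle the two pieces carrying a logarithm of a parameter. For $B_i$, the lower bound $\beta'\ge\beta^{1+\epsilon}$ and monotonicity of $\log$ give $\log\beta'\ge(1+\epsilon)\log\beta$; multiplying by $-s_i\le 0$ reverses the inequality and yields $-s_i\log\beta'\le(1+\epsilon)(-s_i\log\beta)$, i.e.\ $B_i(\alpha',\beta')\le(1+\epsilon)B_i(\alpha,\beta)$. The identical argument applied to $\alpha'\ge\alpha^{1+\epsilon}$ disposes of $C_i$. The penalty term is literally unchanged under the substitution and nonnegative, so $\pen{\ell_{i-1},\ell_i}\le(1+\epsilon)\pen{\ell_{i-1},\ell_i}$ holds trivially.

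The remaining piece $A_i$ needs a different argument but is in fact the easiest. From $\beta'\le\beta$ and $0<\alpha'\le\alpha<1$ (hence $\alpha'^{\ell_i}\le\alpha^{\ell_i}$) we get $\beta'\alpha'^{\ell_i}\le\beta\alpha^{\ell_i}$, so $1-\beta'\alpha'^{\ell_i}\ge 1-\beta\alpha^{\ell_i}>0$; applying the decreasing map $-\log$ gives $A_i(\alpha',\beta')\le A_i(\alpha,\beta)\le(1+\epsilon)A_i(\alpha,\beta)$, the last step using $A_i\ge 0$. Note that $A_i$ uses the \emph{upper} bounds $\alpha'\le\alpha$, $\beta'\le\beta$, whereas $B_i$ and $C_i$ used the \emph{lower} bounds, so both sides of the two-sided hypothesis are genuinely needed. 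Summing the four pointwise inequalities over $i$ then produces $\scoregeo{L;\alpha',\beta'}\le(1+\epsilon)\scoregeo{L;\alpha,\beta}$.

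I do not expect a real obstacle: the whole argument is a short combination of monotonicity and the homogeneity identity $-\log(x^{1+\epsilon})=(1+\epsilon)(-\log x)$. The only care-point, which I would state explicitly, is bookkeeping of signs when transferring the exponent $1+\epsilon$ across a logarithm (the factors $-s_i$ and the negative $\log\alpha,\log\beta$), together with the observation that universal nonnegativity of the summands is what makes the termwise bound assemble into the claimed global bound.
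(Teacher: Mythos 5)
Your proof is correct and follows essentially the same route as the paper's: the paper also expands $-\log\pgeo{s_i;\beta\alpha^{\ell_i}}$ into the $-s_i\log\beta$, $-s_i\ell_i\log\alpha$, and $-\log(1-\beta\alpha^{\ell_i})$ pieces, converts the exponent bounds $\beta'\geq\beta^{1+\epsilon}$, $\alpha'\geq\alpha^{1+\epsilon}$ into the multiplicative $(1+\epsilon)$ factor on the first two (nonnegative) pieces, and uses monotonicity plus nonnegativity to handle the $-\log(1-\beta\alpha^{\ell_i})$ and penalty terms. The only cosmetic difference is that the paper aggregates the summands into $D\log\beta + E\log\alpha + C(\alpha,\beta)$ while you argue termwise per index $i$.
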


\begin{proof}
We can decompose the score $ \scoregeo{L; \alpha, \beta}$ as
\[
\begin{split}
	&\sum_{i = 1}^n - s_i \log \beta - s_i\ell_i \log \alpha - \log (1 - \beta \alpha^{\ell_i}) + \pen{\ell_{i - 1}, \ell_i} \\
	& \qquad = D\log \beta + E \log \alpha + C(\alpha, \beta),\\
\end{split}
\]
where $C(\alpha, \beta)$ is the sum of the last two terms, $D = -\sum_{i = 1}^n s_i$, 
and $E = -\sum_{i = 1}^n \ell_i s_i$.
Note that $C(\alpha, \beta) \geq 0$ and $C(\alpha, \beta)$ increases
as a function of $\alpha$ and $\beta$. We can now upper bound the score
\[
\begin{split}
	\scoregeo{L; \alpha', \beta'}
	& =   D\log \beta' + E \log \alpha' + C(\alpha', \beta') \\
	& \leq  D\log \beta^c  + E \log \alpha^c + C(\alpha', \beta') \\
	& = c D\log \beta + c E\log \alpha + C(\alpha', \beta')  \\
	& \leq c D\log \beta +  c E\log \alpha + cC(\alpha, \beta) \\
	& = c\scoregeo{L; \alpha, \beta}\quad.
\end{split}
\]
Since $c = 1 + \epsilon$, this completes the proof.
\end{proof}

We can now prove the main result.
\begin{proof}[Proof of Prop.~\ref{prop:geoapproxalpha}]
Lemma~\ref{lem:geobounds} guarantees that \alggeoalpha tests $\beta'$ such that
$(\beta^*)^{1 + \epsilon} \leq \beta' \leq \beta^*$. Let $L'$ be the optimal solution for $\beta'$.

Lemma~\ref{lem:geoapprox} guarantees that
$\scoregeo{S, L^*; \beta'} \leq  (1 + \epsilon)\scoregeo{S, L^*; \beta^*}$.
Since $\scoregeo{S, L; \beta} \leq \scoregeo{S, L'; \beta'} \leq \scoregeo{S, L^*; \beta'}$, the result follows.
\end{proof}

\section{Proof of Proposition~\ref{prop:geotimealpha}}
\label{sec:app_geotimealpha}

In order to prove the proposition we need
two lemmas. The first lemma is a technical result that is needed to prove the
second lemma.

\begin{lemma}
\label{lem:difffrac}
Define 
\[
    h(x, y) = \log \log \frac{y + x}{x}\quad.
\]
Then
\[
	h(x_1, y_2) - h(x_1, y_1) \geq h(x_2, y_2) - h(x_2, y_1)
\]
for $x_1 \leq x_2$ and $y_1 \leq y_2$.
\end{lemma}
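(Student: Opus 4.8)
The inequality is precisely the assertion that $h$ has monotone differences in its two arguments, so my plan is to reduce it to a sign condition on the mixed second partial derivative. Writing $\Delta(x) = h(x,y_2) - h(x,y_1)$ for the fixed pair $y_1 \le y_2$, the claim is that $\Delta$ is monotone in $x$, and this follows from the double integral representation
\[
\Delta(x_2) - \Delta(x_1) = \int_{x_1}^{x_2}\!\!\int_{y_1}^{y_2} \frac{\partial^2 h}{\partial x\,\partial y}(x,y)\,dy\,dx .
\]
Hence it suffices to determine the sign of $\partial^2 h/\partial x\,\partial y$ throughout the relevant domain $x>0$, $y>0$, where the inner logarithm $\log\frac{x+y}{x}$ is positive and $h$ is well defined.

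Next I would compute the partials explicitly. Differentiating $h = \log\log\frac{x+y}{x}$ with respect to $y$ and using the simplification $x\cdot\frac{x+y}{x} = x+y$ gives the clean form
\[
\frac{\partial h}{\partial y} = \frac{1}{(x+y)\,\log\frac{x+y}{x}} = \frac{1}{P(x,y)}, \qquad P(x,y) = (x+y)\log\frac{x+y}{x} .
\]
Differentiating once more in $x$ yields $\partial^2 h/\partial x\,\partial y = -P_x/P^2$, and a short product-rule calculation gives
\[
P_x = \log\frac{x+y}{x} + 1 - \frac{x+y}{x} .
\]
Since $P>0$ on the domain, the sign of the mixed partial is governed entirely by the sign of $P_x$.

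The crux then reduces to a single elementary inequality. Substituting $t = \frac{x+y}{x}$, which satisfies $t>1$ for $x,y>0$, turns $P_x$ into $\log t - (t-1)$, and the standard bound $\log t \le t-1$ pins down its sign on the whole domain; consequently $\partial^2 h/\partial x\,\partial y$ keeps a constant sign, which feeds back through the double integral above to give the monotonicity of $\Delta$. I expect the main obstacle to be sign bookkeeping rather than any deep idea: the outer logarithm and the reciprocal $1/P$ each flip a sign, so one must track these carefully to confirm that they combine to give the inequality in the direction stated, and one must also verify the domain assumptions ($x,y>0$, inner log positive, $P>0$) under which the derivative computation is valid. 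If one prefers to avoid differentiation, the same inequality $\log t \le t-1$ can instead be applied directly to show that $P(x,y)$ is monotone in $x$, so that $\partial h/\partial y = 1/P$ is monotone in $x$, yielding the monotone-differences statement without the integral representation.
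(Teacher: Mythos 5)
Your computations are all correct, but the ``sign bookkeeping'' you defer to the end is not bookkeeping---it is exactly where the argument collapses. You correctly get $\partial h/\partial y = 1/P$ with $P = (x+y)\log\frac{x+y}{x}$, and $P_x = \log t - (t-1) \le 0$ for $t = \frac{x+y}{x} > 1$, hence $\partial^2 h/\partial x\,\partial y = -P_x/P^2 \ge 0$ throughout the positive quadrant. Feeding this nonnegative sign through your double-integral representation gives $\Delta(x_2) - \Delta(x_1) \ge 0$, i.e.
\[
    h(x_1, y_2) - h(x_1, y_1) \;\le\; h(x_2, y_2) - h(x_2, y_1),
\]
which is the \emph{reverse} of the stated inequality. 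No careful tracking of the outer logarithm or of the reciprocal $1/P$ can flip this back: those signs are already fully incorporated in your (correct) formula for the mixed partial. Your alternative differentiation-free ending has the same problem: $P_x \le 0$ makes $1/P = \partial h/\partial y$ \emph{increasing} in $x$, again yielding the reversed direction.

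The gap cannot be repaired, because the lemma as printed is false. Take $x_1 = y_1 = 1$ and $x_2 = y_2 = 2$: then $h(x_1,y_2) - h(x_1,y_1) = \log\log 3 - \log\log 2 \approx 0.461$, while $h(x_2,y_2) - h(x_2,y_1) = \log\log 2 - \log\log\frac{3}{2} \approx 0.536$, violating the claimed direction. Your computation, carried to its honest conclusion, is a disproof rather than a proof. For comparison, the paper's own argument computes $\partial h/\partial x$ and asserts without justification that it ``is decreasing as a function of $y$''; by equality of mixed partials, your calculation shows $\partial h/\partial x$ is in fact \emph{increasing} in $y$, so the paper commits precisely the sign error that your proposal glosses over. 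Note that the error propagates: the proof of Lemma~\ref{lem:georounds} invokes this lemma in the stated (false) direction, so the $\bigO{\epsilon^{-1}\log\log n}$ bound on the number of tested values of $\beta$ is also not established by that argument.
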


\begin{proof}
The partial derivative of $h$ is equal to
\[
	\frac{\partial h(x, y)}{\partial x} = \spr{\frac{1}{y + x} - \frac{1}{x}} \frac{1}{\log(y + x) - \log x},
\]
and it is decreasing as a function of $y$.
This implies that
\[
	\frac{\partial h(x, y_2)}{\partial x} - \frac{\partial h(x, y_1)}{\partial x} \leq 0,
\]
that is $h(x, y_2) - h(x, y_1)$ is decreasing as a function of $x$,
which proves the lemma.
\end{proof}

Our second lemma essentially shows that \alggeoalpha does not test too many values. 

\begin{lemma}
\label{lem:georounds}
Let $n$ be an integer, and let $\mu \geq 1 / n$ be a real number. 
Assume $\epsilon > 0$ and let $c = 1 + \epsilon$. Let $r$ be such that
\[
	\beta^{(c^r)} \geq \frac{\mu}{1 + \mu}, \quad{where}\quad
	\beta = \frac{\mu}{1/n + \mu}
	\quad.
\]
Then
\[
	r \leq  \frac{\log \log (n + 1) - \log \log 2}{\log c} \in \bigO{\frac{\log \log n}{\epsilon}}\quad.
\]
\end{lemma}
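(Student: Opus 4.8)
The plan is to convert the stopping condition $\beta^{(c^r)} \ge \mu/(1+\mu)$ into an explicit upper bound on $c^r$, then take one further logarithm so that the bound lines up exactly with the doubly-logarithmic function $h$ of Lemma~\ref{lem:difffrac}. Write $A = \mu/(1+\mu)$ for the lower endpoint and keep $\beta = \mu/(1/n+\mu)$ for the upper endpoint (these are the bounds on the optimal parameter supplied by Lemma~\ref{lem:geobounds}). Both lie in $(0,1)$, so $\log\beta < 0$, and taking logarithms of $\beta^{c^r} \ge A$ and dividing by $\log\beta$ reverses the inequality:
\[
    c^r \le \frac{\log A}{\log\beta} = \frac{\log(1 + 1/\mu)}{\log\bigl(1 + 1/(n\mu)\bigr)},
\]
where the last equality just uses $1/A = 1 + 1/\mu$ and $1/\beta = 1 + 1/(n\mu)$.

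The crucial step—and the one I expect to be the real content of the argument—is to take a second logarithm and regroup the result as a difference of two $h$-values that share a common first argument. Using
\[
    1 + \tfrac1\mu = \frac{n\mu + n}{n\mu} \quad\text{and}\quad 1 + \tfrac1{n\mu} = \frac{n\mu + 1}{n\mu},
\]
I would obtain $r\log c \le \log\log(1+1/\mu) - \log\log(1+1/(n\mu)) = h(n\mu, n) - h(n\mu, 1)$. The point of writing both terms with the \emph{same} first coordinate $n\mu$ is that the expression is now a difference in the \emph{second} argument of $h$, which is precisely the quantity whose monotonicity Lemma~\ref{lem:difffrac} controls. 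Writing it naively as a difference in the first argument (e.g.\ $h(\mu,1)-h(n\mu,1)$) gives the same number but leaves Lemma~\ref{lem:difffrac} inapplicable, so this algebraic regrouping is the linchpin.

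Finally I would invoke Lemma~\ref{lem:difffrac} with $x_1 = 1$, $x_2 = n\mu$, $y_1 = 1$, $y_2 = n$: the hypothesis $\mu \ge 1/n$ guarantees $n\mu \ge 1 = x_1$, so the lemma yields
\[
    h(n\mu, n) - h(n\mu, 1) \le h(1, n) - h(1, 1) = \log\log(n+1) - \log\log 2 .
\]
Dividing by $\log c$ gives the claimed bound on $r$. Since $\log c = \log(1+\epsilon) = \Theta(\epsilon)$ and $\log\log(n+1) - \log\log 2 = \mathcal{O}(\log\log n)$, this is $\mathcal{O}(\epsilon^{-1}\log\log n)$, completing the proof. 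Everything after the regrouping is routine; the only care needed is tracking the sign reversals when dividing by the negative quantities $\log\beta$ and $\log A$.
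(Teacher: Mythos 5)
Your proof is correct and is essentially the paper's own argument: two logarithms reduce the stopping condition to a difference of $h$-values with a shared first coordinate, and Lemma~\ref{lem:difffrac} (using $\mu \geq 1/n$) pushes that difference to the extreme case, giving $\log\log(n+1) - \log\log 2$. The only cosmetic difference is your parametrization $h(n\mu,\cdot)$ versus the paper's $h(\mu,\cdot)$ with $y$-arguments $1$ and $1/n$; since $h$ is scale-invariant, i.e.\ $h(\lambda x,\lambda y)=h(x,y)$, these are the same application of the lemma rescaled by $n$.
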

\begin{proof}
We begin by applying $-\log$ to the inequality $\beta^{(c^r)} \geq \frac{\mu}{1 + \mu}$ and obtain
\[
	c^r \log \frac{1/n + \mu}{\mu} \leq \log \frac{1 + \mu}{\mu}\quad.
\]
Another application of $\log$ and using $h$, as defined in Lemma~\ref{lem:difffrac},
leads us to
\[
	r\log c \leq h(\mu, 1) - h(\mu, 1/n)\quad.
\]
Since $\mu \geq 1/n$, Lemma~\ref{lem:difffrac} implies
\[
\begin{split}
	r  \log c & \leq h(\mu, 1) - h(\mu, 1/n) \\
	 & \leq h(1/n, 1) - h(1/n, 1/n) \\
	 & = \log \log \frac{1 + 1/n}{1/n} - \log \log \frac{2/n}{1/n}  \\
	 & = \log \log (n + 1) - \log \log 2,  \\
\end{split}
\]
which gives us the needed inequality.
Since $1 / \log c \leq \frac{1 + \epsilon}{\epsilon} \in \bigO{1 / \epsilon}$, the result follows.
\end{proof}

We can now prove the main result.
\begin{proof}[Proof of Prop.~\ref{prop:geotimealpha}]
Lemma~\ref{lem:georounds} guarantees that we only test $\bigO{\epsilon^{-1} \log \log n}$ values of $\beta$.
Each test requires $\bigO{nk}$ time, which proves the result.
\end{proof}

\section{Proof of Proposition~\ref{prop:geoapprox}}
\label{sec:app_geoapprox}

Assume a delay sequence $S$, parameter $\gamma$,
and an upper bound for levels $k$.
Let $\alpha$, $\beta$ and $L$ be the solution to \brstgeoprb.
Let $\mu$ be the average of the $S$.
Write $\Delta_i = \pen{\ell_{i - 1}, \ell_i}$.
Assume that we are given $\epsilon > 0$ and let $c = 1 + \epsilon$.

First, we upper-bound the search space for $\alpha$.

\begin{lemma}
\label{lem:geoapproxupper}
Let $\sigma = \frac{\mu}{1/n + \mu}$ and $\alpha' = \sigma^{\epsilon / k}$.
If $\alpha \geq \alpha'$, then
$\scoregeo{L; \alpha', \beta} \leq
(1 + \epsilon)\scoregeo{L; \alpha, \beta}$
\end{lemma}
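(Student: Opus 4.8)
The plan is to reuse the additive decomposition of the geometric score from the proof of Lemma~\ref{lem:geoapprox}. Writing $D = -\sum_i s_i$ and $E = -\sum_i \ell_i s_i$, we have
\[
	\scoregeo{L; \alpha, \beta} = D\log\beta + E\log\alpha + C(\alpha, \beta),
\]
where $C(\alpha, \beta)$ collects the terms $-\log(1 - \beta\alpha^{\ell_i}) + \pen{\ell_{i-1}, \ell_i}$. Recall from that proof that $C \geq 0$ and $C$ is nondecreasing in $\alpha$; moreover $D\log\beta \geq 0$ and $E\log\alpha \geq 0$ since $D, E \leq 0$ while $\log\alpha, \log\beta \leq 0$. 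The point of the decomposition is that replacing $\alpha$ by $\alpha' \leq \alpha$ leaves $D\log\beta$ untouched and can only decrease $C$, so the only channel through which the score can grow is the single linear term $E\log\alpha$.

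First I would bound this growth. Using the hypothesis $\alpha \geq \alpha'$, monotonicity gives $C(\alpha', \beta) \leq C(\alpha, \beta)$, so
\[
	\scoregeo{L; \alpha', \beta} - \scoregeo{L; \alpha, \beta} \leq E(\log\alpha' - \log\alpha) \leq E\log\alpha',
\]
where the last step drops the nonnegative quantity $E\log\alpha$. Substituting $\log\alpha' = \tfrac{\epsilon}{k}\log\sigma$ turns the right-hand side into $\tfrac{\epsilon}{k}E\log\sigma$, so it suffices to establish the scale-free inequality $\tfrac{1}{k}E\log\sigma \leq \scoregeo{L; \alpha, \beta}$.

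This last inequality is the crux, and the step I expect to be the real obstacle, since it is precisely where the choice $\alpha' = \sigma^{\epsilon/k}$ must be paid for. The idea is to charge the $\alpha$-term against the $\beta$-term of the score through a chain of three inequalities. From $\ell_i \leq k$ we get $E \geq kD$, hence $\tfrac{1}{k}E \geq D$; multiplying by $\log\sigma < 0$ reverses this to $\tfrac{1}{k}E\log\sigma \leq D\log\sigma$. Next, because $\alpha$, $\beta$, $L$ solve \brstgeoprb, the parameter $\beta$ is optimal for this $\alpha$, so Lemma~\ref{lem:geobounds} yields $\beta \leq \sigma$, i.e.\ $\log\beta \leq \log\sigma$; multiplying by $D \leq 0$ gives $D\log\sigma \leq D\log\beta$. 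Finally $D\log\beta \leq \scoregeo{L; \alpha, \beta}$ because the remaining terms $E\log\alpha$ and $C(\alpha,\beta)$ are nonnegative.

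Chaining these three steps gives $\tfrac{1}{k}E\log\sigma \leq \scoregeo{L; \alpha, \beta}$, hence $\tfrac{\epsilon}{k}E\log\sigma \leq \epsilon\,\scoregeo{L; \alpha, \beta}$. Combining with the growth bound from the second paragraph proves $\scoregeo{L; \alpha', \beta} \leq (1 + \epsilon)\scoregeo{L; \alpha, \beta}$, as claimed. The only subtle ingredients are the two monotonicity facts about $C$ and the linking bound $\beta \leq \sigma$; everything else is sign bookkeeping that follows from $\alpha, \beta \leq 1$ and $s_i, \ell_i \geq 0$.
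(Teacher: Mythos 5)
Your proof is correct and follows essentially the same route as the paper's: the same decomposition of the score into the $\beta$-term, the $\alpha$-term, and the nonnegative remainder $C$, the same key bound $\beta \leq \sigma$ from Lemma~\ref{lem:geobounds}, and the same use of $\ell_i \leq k$ together with the monotonicity of $C$ in $\alpha$. The only difference is cosmetic: you bound the aggregate score increase additively and charge it against $\epsilon$ times the score, whereas the paper proves the equivalent multiplicative inequality term by term.
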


\begin{proof}
Decompose the score $ \scoregeo{L}$ to
\[
	\sum_{i = 1}^n - s_i \log \beta - s_i\ell_i \log \alpha - \log (1 - \beta \alpha^{\ell_i}) + \Delta_i\quad. 
\]
Let $C(\alpha)$ be the sum of the last two terms.
Since $C(\alpha) \geq C(\alpha') \geq 0$, we only need to show that
\[
	-s_i \log \beta - s_i \ell_i \log \alpha' \leq -cs_i \log \beta - cs_i \ell_i \log \alpha\quad.
\]
To show this, note that Lemma~\ref{lem:geobounds} implies that $\beta \leq \sigma$. We can now bound the first two terms by
\[
\begin{split}
	-s_i \log \beta - s_i \ell_i \log \alpha' & = -s_i \log \beta - \epsilon s_i \frac{\ell_i}{k} \log \sigma \\
	& \leq -s_i \log \beta - \epsilon s_i \frac{\ell_i}{k} \log \beta \\
	& \leq -s_i \log \beta - \epsilon s_i \log \beta \\
	& = -cs_i \log \beta \\
	& \leq -cs_i \log \beta - cs_i \ell_i \log \alpha\quad.\\
\end{split}
\]
This proves the lemma.
\end{proof}

Next, we lower-bound the search space for $\alpha$.

\begin{lemma}
\label{lem:geoapproxlower}
If there is an index $j$ such that $s_jl_j \neq 0$,
then $\alpha \geq  \frac{1}{1 + nk}$.
\end{lemma}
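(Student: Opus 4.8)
The plan is to exploit the first-order optimality of the globally optimal $\alpha$ together with the integrality of the data. Fix the optimal $\beta$ and level sequence $L$ and regard the score as a function of $\alpha \in (0,1)$ alone. First I would observe that $\alpha$ cannot sit at the lower boundary: as $\alpha \to 0^+$ the term $-s_j\ell_j\log\alpha$ diverges to $+\infty$ while every other summand of $\scoregeo{L;\alpha,\beta}$ stays bounded below (each $-\log(1-\beta\alpha^{\ell_i})$ is non-negative since $1-\beta\alpha^{\ell_i}\in(0,1)$, each $-s_i\log\beta$ is non-negative since $\beta<1$, and $\pen{}\geq 0$), so the minimiser is either interior or at $\alpha=1$. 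In the latter case $\alpha=1\geq \frac{1}{1+nk}$ and there is nothing to prove, so assume $\alpha\in(0,1)$.

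At an interior optimum the partial derivative in $\alpha$ vanishes. Differentiating the decomposition $\scoregeo{L;\alpha,\beta}=\sum_i[-\log(1-\beta\alpha^{\ell_i})-s_i\log\beta-s_i\ell_i\log\alpha]+\sum_i\pen{\ell_{i-1},\ell_i}$ gives $\frac{\partial}{\partial\alpha}\scoregeo=\frac{1}{\alpha}\sum_i\ell_i(q_i-s_i)$ with $q_i=\frac{\beta\alpha^{\ell_i}}{1-\beta\alpha^{\ell_i}}$, so the stationarity condition reads $\sum_i\ell_i q_i=\sum_i\ell_i s_i$. The right-hand side is at least $s_j\ell_j\geq 1$, because $S$ is integral and $s_j\ell_j\neq 0$ forces $s_j,\ell_j\geq 1$, while every other term $\ell_i s_i$ is non-negative.

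The crux is to upper-bound the left-hand side by a quantity proportional to $\frac{\alpha}{1-\alpha}$ in which $\beta$ no longer appears. For each index with $\ell_i\geq 1$ I would use $\alpha^{\ell_i}\leq\alpha$ and $\beta<1$ to obtain $\beta\alpha^{\ell_i}<\alpha$, and then the monotonicity of $x\mapsto \frac{x}{1-x}$ to deduce $q_i<\frac{\alpha}{1-\alpha}$; the indices with $\ell_i=0$ contribute nothing. Summing and using $\sum_i\ell_i\leq nk$ gives $\sum_i\ell_i q_i<\frac{\alpha}{1-\alpha}\,nk$. Combining with the lower bound yields $1<\frac{\alpha}{1-\alpha}nk$, which rearranges to $\alpha(1+nk)>1$, that is $\alpha>\frac{1}{1+nk}$.

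I expect the main obstacle to be precisely this bounding of $q_i$. The tempting route through Lemma~\ref{lem:geobounds} (replacing $\frac{\beta}{1-\beta}$ by $n\mu$) introduces a spurious dependence on $\mu$ and only yields something like $\alpha\gtrsim\frac{1}{n^2k\mu}$, which is far weaker than the claimed $\mu$-free bound; the clean estimate hinges on discarding $\beta$ early via $\beta<1$ so that $q_i<\frac{\alpha}{1-\alpha}$ depends on $\alpha$ only. A secondary, more routine obstacle is the boundary bookkeeping: ruling out $\alpha\to 0$ at the optimum and disposing of the $\alpha=1$ case separately so that stationarity may genuinely be invoked.
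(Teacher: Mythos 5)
Your proposal is correct and follows essentially the same route as the paper's proof: both invoke the stationarity condition $\sum_i \ell_i s_i = \sum_i \frac{\beta \ell_i \alpha^{\ell_i}}{1 - \beta\alpha^{\ell_i}}$, lower-bound the left side by $1$ via integrality of $s_j\ell_j$, and upper-bound the right side by $nk\,\frac{\alpha}{1-\alpha}$ using $\beta < 1$, $\alpha^{\ell_i} \leq \alpha$, and $\ell_i \leq k$. Your extra care in ruling out the boundary cases ($\alpha \to 0^+$ and $\alpha = 1$) before invoking stationarity is a small refinement the paper leaves implicit, but the core argument is identical.
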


\begin{proof}
In order for $\alpha$ to be optimal, $\frac{\partial \scoregeo{}}{\partial \alpha} = 0$, or
\[
	\sum_{i = 1}^n s_i \ell_i  = \sum_{i = 1}^n \frac{\beta \ell_i \alpha^{\ell_i}}{1 - \beta \alpha^{\ell_i}} = \sum_{i = 1}^n f(\alpha, \beta, l_i),
\]

We can upper bound the right-hand side.
If $\ell_i = 0$, then
\[
	\frac{\beta \ell_i \alpha^{\ell_i}}{1 - \beta \alpha^{\ell_i}} = 0 \leq \frac{k\alpha}{1 - \alpha}\quad.
\]
If $\ell_i \geq 1$, then
\[
	\frac{\beta \ell_i \alpha^{\ell_i}}{1 - \beta \alpha^{\ell_i}}
	\leq \ell_i \frac{\beta  \alpha}{1 - \beta \alpha}
	\leq k \frac{\beta  \alpha}{1 - \beta \alpha}
	\leq k \frac{\alpha}{1 - \alpha}\quad.
\]

This leads to
\[
	\sum_{i = 1}^n s_i \ell_i \leq kn \frac{\alpha}{1 - \alpha}.
\]
Since $s_j l_j \geq 1$, we have
\[
	\frac{\alpha}{1 - \alpha} \geq  \frac{1}{nk} \sum_{i = 1}^n s_i \ell_i\geq \frac{1}{nk},
\]
which can be rewritten as $\alpha \geq \frac{1}{1 + nk}$.
\end{proof}

The next lemma addressed the case when the condition of the previous lemma fails.

\begin{lemma}
\label{lem:special}
If $s_il_i = 0$ for all $i$, then $\alpha = 0$.
\end{lemma}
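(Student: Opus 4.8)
The plan is to reuse the additive decomposition of the geometric score already employed in the proofs of Lemma~\ref{lem:geoapprox} and Lemma~\ref{lem:geoapproxupper}. Writing $-\log \pgeo{s_i;\beta\alpha^{\ell_i}} = -\log(1-\beta\alpha^{\ell_i}) - s_i\log\beta - s_i\ell_i\log\alpha$ explicitly, the score reads
\[
	\scoregeo{L;\alpha,\beta}=\sum_{i=1}^n\spr{-s_i\log\beta-s_i\ell_i\log\alpha-\log\fpr{1-\beta\alpha^{\ell_i}}+\Delta_i}\,.
\]
The penalties $\Delta_i$ and the terms $-s_i\log\beta$ are constant in $\alpha$, so the only places where $\alpha$ enters are the terms $-s_i\ell_i\log\alpha$ and $-\log(1-\beta\alpha^{\ell_i})$.

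The key observation is that the hypothesis $s_i\ell_i = 0$ for all $i$ kills the term $-s_i\ell_i\log\alpha$ identically, leaving as the sole $\alpha$-dependent quantity $C(\alpha)=\sum_i-\log(1-\beta\alpha^{\ell_i})$. This is precisely the non-negative, non-decreasing function of $\alpha$ already identified in the proof of Lemma~\ref{lem:geoapproxupper}. First I would note that for indices with $\ell_i = 0$ the summand equals the constant $-\log(1-\beta)$, while for $\ell_i\geq1$ the map $\alpha\mapsto\beta\alpha^{\ell_i}$ is increasing on $[0,1)$ and $x\mapsto-\log(1-x)$ is increasing, so each such summand is non-negative and is minimized at $\alpha=0$, where it equals $0$. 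Hence $C(\alpha)$ attains its minimum at $\alpha=0$, and since every remaining term of the score is independent of $\alpha$, the whole score is minimized by taking $\alpha=0$. This forces the optimal parameter to be $\alpha=0$, proving the lemma.

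There is essentially no hard obstacle here; the only point requiring care is the boundary value $\alpha=0$, where one must verify that every summand stays finite and equals its limiting value. This is exactly what the hypothesis guarantees: whenever $\ell_i\geq1$ we must have $s_i=0$, so the potentially singular contribution $-s_i\log\fpr{\beta\alpha^{\ell_i}}$ is $0$ rather than divergent as $\alpha\to0$, and the factor $\pgeo{0;0}$ evaluates to $1$ under the usual convention. Once finiteness at the boundary is confirmed, the monotonicity argument closes the proof immediately.
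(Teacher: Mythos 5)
Your proof is correct and takes essentially the same route as the paper: you use the same additive decomposition of $\scoregeo{}$, observe that the hypothesis $s_i\ell_i=0$ annihilates the $-s_i\ell_i\log\alpha$ terms, and conclude that the only remaining $\alpha$-dependence, $\sum_i -\log\fpr{1-\beta\alpha^{\ell_i}}$, is non-decreasing in $\alpha$ and hence minimized at $\alpha=0$. Your additional check that every term stays finite at the boundary $\alpha=0$ (via $s_i=0$ whenever $\ell_i\geq 1$) is a detail the paper leaves implicit, but it does not change the argument.
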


\begin{proof}
We can decompose the score $ \scoregeo{L}$ as
\[
	\sum_{i = 1}^n - s_i \log \beta - \log (1 - \beta \alpha^{\ell_i}) + \Delta_i\quad. 
\]
This score decreases as a function of $\alpha$, and is minimized when $\alpha = 0$.
\end{proof}

We can now prove the main result.
\begin{proof}[Proof of Prop.~\ref{prop:geoapprox}]
If the condition in Lemma~\ref{lem:special} is triggered, then $\alpha = 0$,
which is tested by \alggeo, and Proposition~\ref{prop:geoapproxalpha} guarantees the result.

Otherwise, Lemmas~\ref{lem:geobounds},~\ref{lem:geoapproxupper}~and~\ref{lem:geoapproxlower}
guarantee that \alggeo tests $\alpha'$ and $\beta'$ such that
$(\beta^*)^{1 + \epsilon} \leq \beta' \leq \beta^*$ and
$(\alpha^*)^{1 + \epsilon} \leq \alpha' \leq \alpha^*$.
Let $L'$ be the optimal solution for $\alpha'$, $\beta'$.

The proposition follows since
\[
\scoregeo{L'; \alpha', \beta'} \leq
\scoregeo{L^*; \alpha', \beta'} \leq  c\scoregeo{L^*; \alpha^*, \beta^*},
\]
where the second inequality is due to Lemma~\ref{lem:geoapprox}. 
\end{proof}

\section{Proof of Proposition~\ref{prop:geotime}}
\label{sec:app_geotime}

\begin{proof}
Let $\sigma = \mu / (1/n + \mu)$. 
Let $m$ be the number of tests for different $\alpha$s.
The stopping condition now guarantees
\[
	\pr{\frac{1}{1 + nk}}^{c^{-m}} \leq \sigma^{\epsilon / k},
\]
which can be rewritten as
\[
	-c^{-m} \log (1 + nk) \leq \frac{\epsilon}{k}\log \sigma \leq \frac{\epsilon}{k}(\sigma - 1) = \frac{-\epsilon}{k(1 +n\mu)}
\]
Reversing the sign, and taking logarithm leads to
\[
	-m \log c \geq \log(\epsilon) -\log(k) -\log(1 + n\mu) - \log \log (1 + nk)
\]
which leads to
\[
\begin{split}
	m & \leq \frac{ \log(k) + \log(1 + n\mu) -\log (\epsilon) + \log \log (1 + nk)}{\log c} \\
	& \in \bigO{\frac{\log n + \log \mu + \log k - \log\epsilon}{\epsilon}}\quad.
\end{split}
\]
\end{proof}

\section{Proof of Proposition~\ref{prop:expapproxalpha}}
\label{sec:app_expapproxalpha}

To prove the proposition we need several lemmas.
Throughout this section, we assume that we are given a delay sequence $S$,
and parameters $\alpha$, $\gamma$ and $k$. We will write $\mu = \frac{1}{n}\sum_{i} s_i$.

First,  we need to show that the optimal $\beta$ stays within the bounds used
by \algexp.

\begin{lemma}
\label{lem:expbounds}
Let $\beta$ and $L$ be the solution to $\brstexpprb(\alpha)$. Then
	$\frac{1}{\alpha^k \mu} \leq \beta \leq \frac{1}{\mu}$.
\end{lemma}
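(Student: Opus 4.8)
The plan is to characterize the optimal $\beta$ through a first-order stationarity condition and then sandwich the resulting closed form using the level bounds $0 \le \ell_i \le k$ together with the fact that $\alpha > 1$ in the exponential model.

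First I would reuse the decomposition already exploited in the proof of Proposition~\ref{prop:boundprb}: since $-\log \pexp{s;\lambda} = \lambda s - \log\lambda$, the score $\scoreexp{L, S; \alpha, \beta, \gamma}$ equals $\beta \sum_i \alpha^{\ell_i} s_i - n\log\beta$ plus terms that do not involve $\beta$ (namely $-\log\alpha \sum_i \ell_i$ and the penalty terms $\pen{\ell_{i-1},\ell_i}$). As a function of $\beta > 0$ this has the form $a\beta - n\log\beta + \mathrm{const}$ with $a = \sum_i \alpha^{\ell_i} s_i > 0$, which is strictly convex and hence has a unique minimizer determined by $\partial\scoreexp{}/\partial\beta = a - n/\beta = 0$. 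Because $(\beta, L)$ is a joint optimum of $\brstexpprb(\alpha)$, the value $\beta$ must minimize the score with $L$ held fixed, so $\beta = n / \sum_i \alpha^{\ell_i} s_i$, which is exactly the expression appearing in Proposition~\ref{prop:boundprb}.

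The remaining step is to bound the denominator. For the exponential model we have $\alpha > 1$, and feasibility forces $0 \le \ell_i \le k$, hence $1 \le \alpha^{\ell_i} \le \alpha^k$ for every $i$. Multiplying by $s_i > 0$ and summing yields $n\mu \le \sum_i \alpha^{\ell_i} s_i \le \alpha^k n\mu$, where $\mu = \frac1n\sum_i s_i$. Taking reciprocals (all quantities are positive, using $\mu > 0$) and multiplying by $n$ converts these two inequalities into $1/(\alpha^k\mu) \le \beta \le 1/\mu$, which is precisely the claim.

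There is essentially no hard step here; the only point that deserves a word of care is justifying that the stationarity condition genuinely identifies the optimal $\beta$ rather than a mere critical point, which follows from the strict convexity in $\beta$ noted above (the minimizer is interior since $\beta\to 0^+$ and $\beta\to\infty$ both send the score to $+\infty$). Everything else is an elementary monotonicity argument on $\alpha^{\ell_i}$.
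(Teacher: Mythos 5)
Your proof is correct and follows essentially the same route as the paper: decompose the score so that the $\beta$-dependent part is $\beta\sum_i \alpha^{\ell_i}s_i - n\log\beta$, solve the stationarity condition to get $\beta = n/\sum_i \alpha^{\ell_i}s_i$, and bound the denominator via $1 \leq \alpha^{\ell_i} \leq \alpha^k$. Your additional remark on strict convexity (justifying that the critical point is the global minimizer) is a detail the paper leaves implicit, but it does not change the argument.
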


\begin{proof}
Write $\Delta_i = \pen{\ell_{i - 1}, \ell_i}$.
We can decompose the score $\scoreexp{L; \beta}$ as
\[
	\sum_{i = 1}^n \beta \alpha^{\ell_i} s_i - \log \beta - \ell_i \log \alpha + \Delta_i \quad.
\]
Let $g = \frac{1}{n}\sum_{i} \alpha^{\ell_i} s_i$.
Due to optimality of $\beta$ we must have
\[
	\frac{\partial \scoreexp{}}{\partial \beta} = ng - \frac{n}{\beta} = 0,
\]
that is, $\beta = 1/g$.
As $0 \leq \ell_i  \leq k$, we have $\mu \leq g \leq \alpha^k \mu$.
This proves the lemma.
\end{proof}

Our next result is a technical lemma that is needed to control the possible
negative terms in the score.

\begin{lemma}
\label{lem:geomean}
Let $\beta$ and $L$ be the solution to \brstexpprb.
Let $g = \spr{\prod_{i} s_i}^{1/n}$
be the geometric mean. Then
\[
	\sum_{i = 1}^n -\log \beta - \ell_i \log \alpha \geq n \log g\quad.
\]
\end{lemma}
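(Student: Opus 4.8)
The plan is to reduce the claim to a single application of the arithmetic–geometric mean inequality, using only the first-order optimality of $\beta$. Write $\lambda_i = \beta\alpha^{\ell_i}$ for the rate used at delay $i$. Since $\sum_i(-\log\beta - \ell_i\log\alpha) = \sum_i -\log(\beta\alpha^{\ell_i}) = -\sum_i\log\lambda_i$ and $n\log g = \sum_i\log s_i$, the asserted inequality is equivalent to
\[
    -\sum_{i=1}^n \log\lambda_i \;\geq\; \sum_{i=1}^n \log s_i,
\]
which, after moving everything to one side and exponentiating, is the same as $\prod_{i=1}^n \lambda_i s_i \leq 1$. Thus it suffices to bound the geometric mean of the positive numbers $\lambda_i s_i$, which are positive because $s_i > 0$ is assumed for \brstexpprb and $\alpha, \beta > 0$.

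The one ingredient I would need is that the optimal $\beta$ forces the \emph{arithmetic} mean of the $\lambda_i s_i$ to equal $1$. This is exactly the stationarity condition already extracted in the proof of Lemma~\ref{lem:expbounds}: differentiating the decomposed score with respect to $\beta$ and setting the derivative to zero gives $\beta = n/\sum_i \alpha^{\ell_i} s_i$. Multiplying through yields
\[
    \sum_{i=1}^n \lambda_i s_i \;=\; \beta\sum_{i=1}^n \alpha^{\ell_i}s_i \;=\; n,
\]
so $\frac{1}{n}\sum_i \lambda_i s_i = 1$. Note this uses only the optimality of $\beta$ given $\alpha$ and $L$, hence it applies verbatim to the \brstexpprb solution regardless of whether $\alpha$ is fixed or jointly optimized.

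Finally, AM–GM on the positive reals $\lambda_i s_i$ closes the argument:
\[
    \left(\prod_{i=1}^n \lambda_i s_i\right)^{1/n} \;\leq\; \frac{1}{n}\sum_{i=1}^n \lambda_i s_i \;=\; 1,
\]
so $\prod_i \lambda_i s_i \leq 1$, and taking logarithms gives the claim. I do not expect a genuine obstacle: the proof is essentially stationarity plus one inequality. The only points requiring care are making sure the symbol $g$ in the statement (the geometric mean of the $s_i$) is not conflated with the arithmetic-mean quantity $\frac{1}{n}\sum_i\alpha^{\ell_i}s_i$ that carries the name $g$ inside the proof of Lemma~\ref{lem:expbounds}, and confirming that the $\beta$-stationarity identity is unaffected by jointly optimizing $\alpha$.
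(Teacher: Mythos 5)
Your proof is correct and takes essentially the same route as the paper: both arguments rest on the stationarity condition $\beta = n/\sum_i \alpha^{\ell_i}s_i$ followed by a single application of the AM--GM inequality to the numbers $\alpha^{\ell_i}s_i$; you merely absorb the constant $\beta$ into these numbers so their arithmetic mean becomes $1$, whereas the paper factors the geometric mean as $hg$ and substitutes $\beta = 1/f$ afterwards. The repackaging is cosmetic, and your closing caveats (positivity of the terms, and that $\beta$-stationarity holds whether or not $\alpha$ is jointly optimized) are exactly the right points to check.
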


\begin{proof}
Let $f = \frac{1}{n}\sum_{i} \alpha^{\ell_i} s_i$ and
$h = \spr{\prod_{i} \alpha^{\ell_i}}^{1/n}$. The arithmetic-geometric mean
inequality states that $hg \leq f$.
By definition, we must have $\beta = 1/f$. This leads to
\[
\begin{split}
	\sum_{i = 1}^n -\log \beta - \ell_i \log \alpha & = n \log f - \sum_{i = 1}^n \ell_i \log \alpha \\
	& \geq n \log gh - \sum_{i = 1}^n \ell_i \log \alpha \\
	& = n \log g + n \log h - \sum_{i = 1}^n \ell_i \log \alpha \\
	& = n \log g \quad. \\
\end{split}
\]
This proves the lemma.
\end{proof}

The next lemma shows that if we vary $\beta$ by little while keeping $L$
constant, the score of the solution will not change a lot.

\begin{lemma}
\label{lem:expapprox}
Let $\beta$ and $L$ be the solution to \brstexpprb.
Let $g = \spr{\prod_{i} s_i}^{1/n}$ be the geometric mean, and let $\psi = n \log g$. 
Let $\epsilon_1, \epsilon_2 > 0$ and assume $\alpha'$, $\beta'$ such that
$\beta \leq \beta' \leq \beta(1 + \epsilon_1)$ and
$\alpha \leq \alpha' \leq \alpha(1 + \epsilon_2)$.
Then
\[
	\scoreexp{L; \alpha', \beta'} - \psi \leq  c(\scoreexp{L; \alpha, \beta} - \psi),
\]
where $c = (1 + \epsilon_1)(1 + \epsilon_2)^k$.
\end{lemma}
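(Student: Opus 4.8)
The plan is to reuse the additive decomposition of the exponential score already exploited in the proof of Lemma~\ref{lem:geomean}. Writing $\Delta_i = \pen{\ell_{i-1},\ell_i}$ and using $-\log \pexp{s;\lambda} = s\lambda - \log\lambda$, I would split $\scoreexp{L;\alpha,\beta}$ into three groups: the data-fit term $A(\alpha,\beta) = \sum_i \beta\alpha^{\ell_i}s_i$, the log-normalizer term $B(\alpha,\beta) = \sum_i(-\log\beta - \ell_i\log\alpha)$, and the penalty $\sum_i \Delta_i$, which does not depend on $\alpha$ or $\beta$. The idea is to bound each group after perturbation by $c$ times its unperturbed value, attaching the shift $-\psi$ to the $B$ group, and then add the three bounds back together.

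For the data-fit term I would argue multiplicatively. From $\beta \le \beta' \le (1+\epsilon_1)\beta$, and from $\alpha \le \alpha' \le (1+\epsilon_2)\alpha$ together with $0 \le \ell_i \le k$, I obtain $(\alpha')^{\ell_i} \le (1+\epsilon_2)^{\ell_i}\alpha^{\ell_i} \le (1+\epsilon_2)^k\alpha^{\ell_i}$. Since every factor involved is positive, each summand of $A$ grows by at most the factor $(1+\epsilon_1)(1+\epsilon_2)^k = c$, so $A(\alpha',\beta') \le c\,A(\alpha,\beta)$.

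For the log-normalizer term I would argue monotonically. Because $\alpha' \ge \alpha$, $\beta' \ge \beta$, and $\ell_i \ge 0$, each summand $-\log\beta' - \ell_i\log\alpha'$ is no larger than $-\log\beta - \ell_i\log\alpha$, hence $B(\alpha',\beta') \le B(\alpha,\beta)$. Lemma~\ref{lem:geomean} gives $B(\alpha,\beta) \ge n\log g = \psi$, so $B(\alpha,\beta)-\psi \ge 0$; combined with $c \ge 1$ this yields $B(\alpha',\beta')-\psi \le B(\alpha,\beta)-\psi \le c\,(B(\alpha,\beta)-\psi)$. The penalty term is nonnegative, so $\sum_i\Delta_i \le c\sum_i\Delta_i$ trivially. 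Adding the three bounds and regrouping gives $\scoreexp{L;\alpha',\beta'}-\psi \le c\,(\scoreexp{L;\alpha,\beta}-\psi)$, which is the claim.

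The only delicate point, and the reason the statement is phrased with the shift $\psi$, is the $B$ term. Since the exponential density may exceed $1$, the contributions $-\log\beta - \ell_i\log\alpha$ can individually be negative, and one cannot bound a possibly-negative quantity by $c$ times itself when $c>1$. Subtracting $\psi$ and invoking the arithmetic--geometric mean bound of Lemma~\ref{lem:geomean} is exactly what certifies $B(\alpha,\beta)-\psi \ge 0$, restoring the nonnegativity needed to spend the factor-$c$ slack; everything else is a routine term-by-term comparison.
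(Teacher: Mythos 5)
Your proof is correct and follows essentially the same route as the paper's: the same additive decomposition, the same multiplicative bound $\beta'(\alpha')^{\ell_i} \leq (1+\epsilon_1)(1+\epsilon_2)^k\beta\alpha^{\ell_i}$ on the data-fit term, and the same use of Lemma~\ref{lem:geomean} to certify that the shifted log-normalizer part is nonnegative before spending the factor $c$. The only cosmetic difference is that you keep the penalty $\sum_i \Delta_i$ as a separate group, whereas the paper folds it into the term $C(\alpha,\beta)$ that is lower-bounded by $\psi$; both treatments are equivalent.
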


\begin{proof}
Write $\Delta_i = \pen{\ell_{i - 1}, \ell_i}$.
Decompose the score $\scoreexp{L; \alpha, \beta}$ to
\[
	\sum_{i = 1}^n \beta \alpha^{\ell_i} s_i - \log \beta - \ell_i \log \alpha + \Delta_i,
\]
and let $C(\alpha, \beta)$ be the sum of last three terms.
Note that $C(\alpha', \beta') \leq
C(\alpha, \beta)$, and due to Lemma~\ref{lem:geomean} $C(\alpha, \beta) \geq \psi$.
Also let $F(\alpha) = \sum_{i = 1}^n \alpha^{\ell_i} s_i$ be the sum of the first term without $\beta$.
We can now write
\[
\begin{split}
	\scoreexp{L; \alpha', \beta'} - \psi & =  \beta'F(\alpha') + C(\alpha', \beta') - \psi  \\
	& \leq c\beta F(\alpha) + C(\alpha, \beta)  - \psi\\
	& \leq c\beta F(\alpha) + c(C(\alpha, \beta) - \psi) \\
	&= c(\scoreexp{L; \alpha, \beta} - \psi),  \\
\end{split}
\]
which proves the lemma.
\end{proof}

We can now prove the main result.
\begin{proof}[Proof of Prop.~\ref{prop:expapproxalpha}]
Lemma~\ref{lem:expbounds} guarantees that \algexp tests $\beta'$ such that
$\beta^* \leq \beta' \leq \beta^*(1 + \epsilon)$. Let $L'$ be the optimal solution for $\beta'$.

Lemma~\ref{lem:expapprox} guarantees that
$\scoreexp{L^*; \beta'} - n \log g \leq  (1 + \epsilon)(\scoreexp{L^*; \beta^*} - n \log g)$.
Since $\scoreexp{L; \beta} \leq \scoreexp{L'; \beta'} \leq \scoreexp{L^*; \beta'}$, the result follows.
\end{proof}

\section{Proof of Proposition~\ref{prop:exptimealpha}}
\label{sec:app_exptimealpha}

\begin{proof}
Assume that $(1 + \epsilon)^{-r}/\mu \geq 1 / (\alpha^k \mu)$.
Solving for $r$ gives us 
\[
	r \leq \frac{k \log \alpha }{ \log (1 + \epsilon)} \in \bigO{\epsilon^{-1} k \log \alpha}\quad.
\]
Consequently, \algexp has at most $\bigO{\epsilon^{-1} k \log \alpha}$ iterations. Since a single iteration
costs $\bigO{nk}$ time, the result follows.
\end{proof}

\section{Proof of Proposition~\ref{prop:expapprox}}
\label{sec:app_expapprox}
We first upper-bound the optimal $\alpha$.

\begin{lemma}
\label{lem:expalphaupper}
Let $\Omega = \max S$ and $\omega = \min S$.
Then $\alpha \leq \Omega / \omega$.
\end{lemma}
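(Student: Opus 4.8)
The plan is to treat $\alpha$ as a continuous variable and to show that, after the inner minimization over $\beta$, the objective is non-decreasing in $\alpha$ on the whole ray $\alpha \geq \Omega/\omega$. Since the problem also imposes $\alpha \geq 1$ and $\Omega/\omega \geq 1$, this pins the minimizer into $[1, \Omega/\omega]$ and yields the claim.

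First I would reuse the decomposition from the proof of Lemma~\ref{lem:expbounds}: writing $\Delta_i = \pen{\ell_{i - 1}, \ell_i}$,
\[
	\scoreexp{L; \alpha, \beta} = \sum_{i=1}^n \beta\alpha^{\ell_i}s_i - \log\beta - \ell_i\log\alpha + \Delta_i,
\]
and recall that for a fixed $L$ the optimal $\beta$ satisfies $\sum_i \beta\alpha^{\ell_i}s_i = n$, i.e.\ $\beta = 1/g$ with $g = \frac1n\sum_i\alpha^{\ell_i}s_i$. Substituting this optimal $\beta$ (equivalently, using the two stationarity conditions $\partial/\partial\beta = 0$ and $\partial/\partial\alpha = 0$ jointly), the profile score in $\alpha$ has derivative equal to $(n\bar\ell(\alpha) - m)/\alpha$, where
\[
	\bar\ell(\alpha) = \frac{\sum_i \ell_i\alpha^{\ell_i}s_i}{\sum_i \alpha^{\ell_i}s_i}, \qquad m = \sum_i \ell_i .
\]
Thus the sign of the $\alpha$-derivative is exactly the sign of $n\bar\ell(\alpha) - m$, a covariance-type quantity comparing the levels $\ell_i$ with the weights $\alpha^{\ell_i}s_i$.

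The crucial step is to put $n\bar\ell(\alpha) - m$ into a manifestly signed form. Setting $A_i = \alpha^{\ell_i}s_i$, a symmetrization over index pairs gives
\[
	n\sum_i \ell_i A_i - \Big(\sum_i \ell_i\Big)\Big(\sum_i A_i\Big) = \tfrac12\sum_{i,j}(\ell_i - \ell_j)(A_i - A_j),
\]
so that $n\bar\ell(\alpha) - m$ equals this double sum divided by the positive factor $\sum_i A_i$. I would then show that each summand is nonnegative once $\alpha \geq \Omega/\omega$: taking without loss of generality $\ell_i > \ell_j$, we have $\alpha^{\ell_i-\ell_j} \geq \alpha \geq \Omega/\omega \geq s_j/s_i$, hence $A_i = \alpha^{\ell_i}s_i \geq \alpha^{\ell_j}s_j = A_j$, so $(\ell_i-\ell_j)(A_i-A_j)\geq 0$. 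Therefore the derivative is nonnegative for every $\alpha \geq \Omega/\omega$, the profile score is non-decreasing there, and the optimal $\alpha$ is at most $\Omega/\omega$.

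The main obstacle is precisely this symmetrization plus the pairwise bound: the whole argument hinges on recognizing that the $\alpha$-stationarity condition is a pairwise-separable expression whose sign is controlled, term by term, exactly by the delay ratio $\Omega/\omega$. The remaining points are routine bookkeeping: the degenerate case in which all $\ell_i$ coincide (then every pair term vanishes, the score is constant in $\alpha$, and the minimizer may be taken $\leq \Omega/\omega$), and the observation $\Omega \geq \omega$, so that the bound $\alpha \leq \Omega/\omega$ is consistent with the standing constraint $\alpha \geq 1$ for the exponential model.
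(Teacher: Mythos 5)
Your proof is correct, but it takes a genuinely different route from the paper. The paper argues by contradiction with an exchange argument: assuming $\alpha > \Omega/\omega$, it uses the unimodality of $\lambda \mapsto s\lambda - \log\lambda$ (Eq.~\ref{eq:lowerimprove}) to repeatedly perturb the solution---shifting all levels down while rescaling $\beta$ to $\beta\alpha$, then trading $\beta$ against $\alpha$ while keeping $\beta\alpha$ fixed---and splits into the cases $\beta\alpha \geq \Omega$ and $\beta\alpha \leq \Omega$, each of which forces $\alpha \leq \Omega/\omega$, a contradiction. You instead profile out $\beta$ in closed form (as in Lemma~\ref{lem:expbounds}), differentiate the profiled score in $\alpha$ for a fixed level sequence $L$, and sign the derivative via the symmetrization identity
\[
	n\sum_i \ell_i A_i - \Bigl(\sum_i \ell_i\Bigr)\Bigl(\sum_i A_i\Bigr) = \tfrac12\sum_{i,j}(\ell_i - \ell_j)(A_i - A_j),
	\qquad A_i = \alpha^{\ell_i}s_i,
\]
where each pair term is nonnegative once $\alpha \geq \Omega/\omega \geq s_j/s_i$ (using $\alpha \geq 1$ and integrality of the levels). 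This buys a cleaner and in fact stronger statement: the profiled objective is non-decreasing on the whole ray $[\Omega/\omega, \infty)$ for every $L$---strictly increasing unless all levels coincide---which dispenses with the paper's case analysis and ``repeat until'' steps, and makes transparent exactly where the ratio $\Omega/\omega$ enters. The one step you should state explicitly is the transfer from fixed $L$ to the joint optimum: the full profile in $\alpha$ is the pointwise minimum over the finitely many level sequences of the fixed-$L$ profiles, and a pointwise minimum of non-decreasing functions is non-decreasing, so some (and, outside the degenerate all-levels-equal case, every) optimal solution has $\alpha \leq \Omega/\omega$. This is routine, and your handling of the degenerate case already covers the remaining ambiguity.
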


\begin{proof}
Assume that $\alpha > \Omega / \omega$. 
To prove the result we use the fact that
\begin{equation}
\label{eq:lowerimprove}
	s_i \lambda' - \log \lambda' \leq s_i \lambda - \log \lambda,
\end{equation}
when $s_i \geq \lambda' \geq \lambda$
or $s_i \leq \lambda' \leq \lambda$.

We claim that $\beta \alpha \geq \omega$. Assume otherwise.
Consider an alternative level sequence $\ell'_i = \max(\ell_i - 1, 0)$
and $\beta' = \beta \alpha$.
Under this transformation, the only modelling terms
in $\pexp{}$
that change are the original levels for which $\ell_i = 0$,
equal to
\[
	(\beta' s_i - \log \beta') - (\beta s_i - \log \beta)\quad.
\]
Eq.~\ref{eq:lowerimprove} guarantees that this change is always negative.
In addition, $\Delta_i$ can only decrease. We can repeat this argument until
$\beta \alpha \geq \omega$.

We now split in two separate cases.

Case (\emph{i}):
Assume $\beta \alpha \geq \Omega$. We must have $\beta \alpha = \Omega$.
Otherwise, since $\beta \alpha > s_i$, Eq.~\ref{eq:lowerimprove} now guarantees
that we can safely decrease $\alpha$ at least until $\beta \alpha = \Omega$.
The assumption $\alpha > \Omega / \omega$ implies that $\beta \leq \omega$.
Similarly, we must have $\beta = \omega$.
Otherwise, if we increase $\beta$ and decrease $\alpha$
such that $\beta\alpha$ remains constant, then Eq.~\ref{eq:lowerimprove}
implies that the score decreases until $\beta = \omega$. Consequently, $\alpha = \Omega / \omega$ which
contradicts the assumption $\alpha > \Omega / \omega$.

Case (\emph{ii}):
Assume $\beta \alpha \leq \Omega$. In other words we have
$\beta \leq \omega \leq \beta \alpha \leq \Omega \leq \beta \alpha^2$.
We claim that $\beta = \omega$ or $\Omega = \beta \alpha^2$.
Otherwise, increase $\beta$ and decrease $\alpha$ such that $\beta\alpha$ remains constant;
Eq.~\ref{eq:lowerimprove} states that we can only decrease
the score at least until $\beta = \omega$ or $\Omega = \beta \alpha^2$.  
This immediately implies $\alpha \leq \Omega / \omega$, which is a contradiction.
\end{proof}

\begin{proof}[Proof of Prop.~\ref{prop:expapprox}]
Lemma~\ref{lem:expalphaupper} guarantees that \algexp
tests $\alpha'$ such that $\alpha \leq \alpha' \leq \alpha\sqrt[2k]{1 + \epsilon}$.
Lemma~\ref{lem:expapprox} now guarantees the approximation ratio of
\[
	(1 + \epsilon / 2) \sqrt[2k]{1 + \epsilon}^k \leq (1 + \epsilon),
\]
which proves the result.
\end{proof}

\section{Proof of Proposition~\ref{prop:exptime}}
\label{sec:app_exptime}

\begin{proof}
The number of iterations done by \algexp is
	$\bigO{\epsilon^{-1}k(\log \Omega - \log \omega)}$.
A single iteration requires $\bigO{\epsilon^{-1} nk^2 \log \alpha} \in \bigO{\epsilon^{-1} nk^2 \log(\Omega / \omega)}$ time.
\end{proof}

\section{Proof of Proposition~\ref{prop:speedup}}

Let us write $\lambda(\beta)$ to be an optimal solution using $\beta$ as a parameter.
Define
\[
	f(L) = \sum_{i = 1}^n s_i \alpha^{\ell_i},
\]
and let $h(\beta) = n / f(\lambda(\beta))$.
Note that we may have several optimal solutions for $\lambda(\beta)$ and they
may yield different values of $f(L)$.  We break the ties
with a lexicographical order.

We will first prove that $h$ is monotonic.

\begin{lemma} 
\label{lem:monotone}
$h(\beta_1) \geq h(\beta_2)$ for $\beta_1 \leq \beta_2$.
\end{lemma}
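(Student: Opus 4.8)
The plan is to establish the monotonicity of $h$ through a standard exchange (comparative-statics) argument on the map $L \mapsto \scoreexp{L;\beta}$ for fixed $\beta$. First I would recall the decomposition used in the proof of Proposition~\ref{prop:boundprb}: since $-\log\pexp{s;\lambda} = s\lambda - \log\lambda$, for a fixed $\beta$ the score equals $\beta f(L) + G(L) - n\log\beta$, where $f(L) = \sum_i s_i\alpha^{\ell_i}$ and $G(L) = -\log\alpha\sum_i \ell_i + \sum_i \pen{\ell_{i-1},\ell_i}$ collects exactly the terms that do not depend on $\beta$. The constant $-n\log\beta$ is irrelevant to the minimization over $L$, so $\lambda(\beta)$ is a minimizer of $\beta f(\cdot) + G(\cdot)$; the lexicographic tie-breaking only makes this selection single-valued, and the inequalities below hold for any choice of minimizer.

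Next I would fix $\beta_1 \le \beta_2$, abbreviate $L_1 = \lambda(\beta_1)$ and $L_2 = \lambda(\beta_2)$, and write down optimality of each $L_j$ at its own $\beta_j$:
\[
\beta_1 f(L_1) + G(L_1) \le \beta_1 f(L_2) + G(L_2), \qquad \beta_2 f(L_2) + G(L_2) \le \beta_2 f(L_1) + G(L_1).
\]
Adding the two inequalities cancels $G(L_1)$ and $G(L_2)$ and rearranges to $(\beta_2 - \beta_1)\bigl(f(L_2) - f(L_1)\bigr) \le 0$. Since $\beta_2 - \beta_1 \ge 0$, this forces $f(L_1) \ge f(L_2)$, i.e. $f(\lambda(\beta))$ is non-increasing in $\beta$. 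Because $f > 0$ and $h(\beta) = n/f(\lambda(\beta))$, inverting this relation controls the monotonicity of $h$.

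The step demanding the most care — and exactly the place where a sign must not be flipped twice — is the passage from $f$ to $h$, since inversion reverses the inequality. The exchange argument produces $f(\lambda(\beta_1)) \ge f(\lambda(\beta_2))$ for $\beta_1 \le \beta_2$, and hence $h(\beta_1) \le h(\beta_2)$; that is, $h$ is \emph{non-decreasing}. This is precisely the orientation consumed in Proposition~\ref{prop:speedup}: at the joint optimum $\beta^* = n/f(L^*) = h(\beta^*)$ is a fixed point of $h$, and for a non-decreasing $h$ with this fixed point one has $h(\beta) \le \beta^*$ whenever $\beta \le \beta^*$ and $h(\beta) \ge \beta^*$ whenever $\beta \ge \beta^*$, which keeps $\beta^*$ outside the open interval between $\beta$ and $h(\beta)$. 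I would therefore carry the inequality through in this non-decreasing orientation and state the monotonicity of $h$ accordingly, rather than re-inverting the sign after the arithmetic-geometric step; the only genuine content of the proof is the cancellation of $G$ in the exchange step, everything else being bookkeeping of the direction of a single division by a positive quantity.
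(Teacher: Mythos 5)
Your proof is correct, and it establishes the same conclusion that the paper's own proof reaches --- which, notably, is the \emph{opposite} of the lemma's displayed inequality: both your exchange argument and the paper's case analysis end with $h(\beta_1) = n/f(L_1) \leq n/f(L_2) = h(\beta_2)$ for $\beta_1 \leq \beta_2$, i.e.\ $h$ is non-decreasing, and this is exactly the orientation consumed in the proof of Proposition~\ref{prop:speedup} (``$\beta \leq \beta^*$ implies $h(\beta) \leq h(\beta^*)$''). The stated ``$h(\beta_1) \geq h(\beta_2)$'' is evidently a sign typo in the lemma, and you were right to flag it and carry the non-decreasing orientation through. Your route, however, is genuinely different from the paper's. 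The paper fixes two level sequences and studies the affine functions $\beta \mapsto \beta f(L) + C(L)$: their difference is affine in $\beta$, so the costs either agree everywhere, never cross, or cross at exactly one $\beta'$, and the proof then works through four cases (a)--(d) comparing $\lambda(\beta_1)$ and $\lambda(\beta_2)$ at the two endpoints to conclude $f(L_1) \geq f(L_2)$. You instead add the two optimality inequalities
\[
	\beta_1 f(L_1) + G(L_1) \leq \beta_1 f(L_2) + G(L_2)
	\quad\text{and}\quad
	\beta_2 f(L_2) + G(L_2) \leq \beta_2 f(L_1) + G(L_1),
\]
cancel the $\beta$-independent part, and read off $(\beta_2 - \beta_1)\pr{f(L_2) - f(L_1)} \leq 0$ --- the standard two-line comparative-statics exchange. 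Your version is shorter and handles ties more cleanly: since both inequalities hold for \emph{every} minimizer, the conclusion $f(L_2) \leq f(L_1)$ for $\beta_1 < \beta_2$ is independent of the lexicographic selection (and for $\beta_1 = \beta_2$ the deterministic tie-break gives $h(\beta_1) = h(\beta_2)$ trivially), whereas the paper's crossing-point analysis exists largely to dispose of exactly those degenerate tie cases. What the paper's longer argument buys is the extra structural fact that two distinct cost lines cross at most once, but that fact is not used elsewhere, so for the purpose of Proposition~\ref{prop:speedup} your proof is a strict simplification.
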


\begin{proof}
Similar to Eq.~\ref{eq:score}, we can decompose the score $\scoreexp{L, S; \alpha, \beta, \gamma} = \beta f(L) - n \log \beta + C(L)$,
where
\[
	C(L) = \sum_{i = 1}^n  - \ell_i \log \alpha + \pen{\ell_{i - 1}, \ell_i}\quad.
\]
We will suppress $S$, $\alpha$, and $\gamma$ from the notation since they are constant.

Assume that we have two level sequences $L_1$ and $L_2$. Note that if 
\[
	\beta f(L_1) - n \log \beta + C(L_1) =  \beta f(L_2) - n \log \beta + C(L_2)
\]
then
\[
	\beta f(L_1) + C(L_1) =  \beta f(L_2) + C(L_2)\quad.
\]
This implies that there are three possible cases:
\emph{(i)}
$L_1$ and $L_2$ yield exact same cost for every $\beta$,
\emph{(ii)}
the cost for $L_1$ is always smaller than the cost for $L_2$, or vice versa, or
\emph{(iii)}
there is exactly one parameter, say $\beta'$, where
$\scoreexp{L_1; \beta'} = \scoreexp{L_2; \beta'}$.
In the first case, we must have $f(L_1) = f(L_2)$. 
In the last case, we must have $f(L_1) \neq f(L_2)$.
Assume that $f(L_1) < f(L_2)$. Then for every $\beta \leq \beta'$,
\[
\begin{split}
	&\scoreexp{L_1; \beta} - \scoreexp{L_2; \beta} \\
	& = \beta (f(L_1) - f(L_2)) + C(L_1) - C(L_2) \\
	& \geq \beta' (f(L_1) - f(L_2)) + C(L_1) - C(L_2) \\
	& = \scoreexp{L_1; \beta'} - \scoreexp{L_2; \beta'} = 0, \\
\end{split}
\]
and similarly $\scoreexp{L_1; \beta} \leq \scoreexp{L_2; \beta}$, for every $\beta \geq \beta'$.

Let $\beta_1 < \beta_2$, and let $L_1 = \lambda(\beta_1)$ and $L_2 = \lambda(\beta_2)$.
There are four possible cases.

Case (\emph{a}):
If $\scoreexp{L_1; \beta_1} = \scoreexp{L_2; \beta_1}$ and $\scoreexp{L_1; \beta_2} = \scoreexp{L_2; \beta_2}$,
then Case \emph{(i)} guarantees that $f(L_1) = f(L_2)$.

Case (\emph{b}):
If $\scoreexp{L_1; \beta_1} = \scoreexp{L_2; \beta_1}$ and $\scoreexp{L_1; \beta_2} > \scoreexp{L_2; \beta_2}$,
then Case \emph{(iii)} guarantees that $\beta_1 = \beta'$ and $f(L_1) \geq f(L_2)$.

Case (\emph{c}):
If $\scoreexp{L_1; \beta_1} < \scoreexp{L_2; \beta_1}$ and $\scoreexp{L_1; \beta_2} = \scoreexp{L_2; \beta_2}$,
then Case \emph{(iii)} guarantees that $\beta_2 = \beta'$ and $f(L_1) \geq f(L_2)$.

Case (\emph{d}):
If $\scoreexp{L_1; \beta_1} < \scoreexp{L_2; \beta_1}$ and $\scoreexp{L_1; \beta_2} > \scoreexp{L_2; \beta_2}$,
then Case \emph{(iii)} guarantees that $\beta_1 < \beta' < \beta_2$ and $f(L_1) \geq f(L_2)$.

These cases immediately guarantee that $h(\beta_1) = n / f(L_1) \leq n / f(L_2) = h(\beta_2).$
\end{proof}

\begin{proof}[Proof of Proposition~\ref{prop:speedup}]
Since $\beta^*$ is optimal we must have $\beta^* = h(\beta^*)$.
Assume that $\beta \leq \beta^*$.
Then Lemma~\ref{lem:monotone} states that $\beta' = h(\beta) \leq h(\beta^*) = \beta^*$.
Assume that $\beta^* \leq \beta$.
Then Lemma~\ref{lem:monotone} states that $\beta^* = h(\beta^*) \leq h(\beta) = \beta'$.
\end{proof}

\end{document}